\newcommand{\be}{\begin{equation}}
\newcommand{\ee}{\end{equation}}
\newcommand{\ba}{\begin{array}}
\newcommand{\ea}{\end{array}}
\newcommand{\bea}{\begin{eqnarray}}
\newcommand{\eea}{\end{eqnarray}}
\newcommand{\NP}{\ensuremath{\mathsf{NP}}\xspace}
\newcommand{\defeq}{:=}
\newcommand{\sharpP}{{\#\mathsf{P}}}
\newcommand{\ceqp}{{\mathsf{C_=P}}}
\newcommand{\poly}{{\mathrm{poly}}}
\newtheorem{thm}{Theorem}
\newtheorem{cor}{Corollary}
\newtheorem*{cor*}{Corollary}
\newtheorem{lem}{Lemma}
\newtheorem{conj}{Conjecture}
\theoremstyle{definition}
\newtheorem{dfn}{Definition}
\newtheorem{rem}{Remark}
\newcommand{\authnote}[2]{}
\newcolumntype{C}[1]{%
 >{\vbox to 4ex\bgroup\vfill\centering}%
 p{#1}%
 <{\egroup}}  
\begin{document}
\title{Quantum supremacy and hardness of estimating output probabilities of quantum circuits}

\author{
\IEEEauthorblockN{Yasuhiro Kondo, Ryuhei Mori}
\IEEEauthorblockA{School of Computing\\
Tokyo Institute of Technology\\
Tokyo, Japan\\
mori@c.titech.ac.jp}
\and
\IEEEauthorblockN{Ramis Movassagh}
\IEEEauthorblockA{IBM Quantum\\
MIT-IBM Watson AI Lab\\
Cambridge MA, U.S.A.\\
ramis@us.ibm.com}
}

\maketitle

\begin{abstract}
Motivated by the recent experimental demonstrations of quantum supremacy, proving the hardness of the output of random quantum circuits is an imperative near term goal.
We prove under the complexity theoretical assumption of the non-collapse of the polynomial hierarchy that approximating the output probabilities of random quantum
circuits 
to within $\exp(-\Omega(m\log m))$ additive error is hard for any classical computer, where
$m$ is the number of gates in the quantum computation.
More precisely, we show that the above problem is $\#\mathsf{P}$-hard under $\mathsf{BPP}^{\mathsf{NP}}$ reduction.
In the recent
experiments, the quantum circuit has $n$-qubits and the architecture
is a two-dimensional grid of size $\sqrt{n}\times\sqrt{n}$ \cite{arute2019quantum}.
Indeed for constant depth circuits approximating the output
probabilities to within $2^{-\Omega(n\log{n})}$ is hard. For circuits of depth $\log{n}$ or $\sqrt{n}$ for which the anti-concentration
property holds, approximating the output probabilities to within $2^{-\Omega(n\log^2{n})}$ and $2^{-\Omega(n^{3/2}\log n)}$
is hard respectively. We then show that the hardness results extend to any open neighborhood of an arbitrary (fixed) circuit including the trivial circuit with identity gates. We made an effort to find the best proofs and proved these results from first principles, which do not use the standard techniques such as the Berlekamp--Welch algorithm, the usual Paturi's lemma, and Rakhmanov's result.
\end{abstract}
\begin{IEEEkeywords}
Quantum supremacy; quantum complexity; quantum circuits; random circuit sampling; extended Church-Turing thesis; average-case hardness.
\end{IEEEkeywords}

\section{Introduction and related work}
Moore's law for classical (super-)computers is reaching a saturation point because if the computation were done with smaller components confined to smaller spaces, then the quantum effects would become relevant.  Consequently alternative models and architectures are being investigated to empower the future of computation. Among the many proposals, quantum computing is currently the only model of computation that could potentially exponentially outperform any classical computer. Proving this in affirmative has been a main driving force in the field of quantum computation. 

For quantum computers to have the awesome computational power just described the so called Extended Church-Turing Thesis (ECTT) would need to be refuted. ECTT states that a probabilistic Turing machine can efficiently simulate any model of computation that can be realized in Nature (i.e., a realistic computation). A single computational task that would provably refute ECTT would be sufficient. Therefore it is an imperative near-term goal to show that for a given computational task (whatever it may be) a quantum computer can provably outperform any classical computer by running in a time that is exponentially faster. It is then necessary that an actual experiment is performed to demonstrate the separation. Hence to refute the ECTT one needs a solid complexity theoretical foundation and an experimental demonstration. This event (i.e.,~refutation of ECTT) would be a watershed moment in the history of computation, which would usher the era of quantum supremacy.

It is noteworthy that quantum computation has already demonstrated classical ascendancy for search problems, where Grover's algorithm provably gives a quadratic speed-up over the best possible classical search algorithms \cite{grover1996fast}. More striking is the Simons problem which proves that on a quantum computer a hidden sub-string can be found exponentially faster than on a classical computer~\cite{simon1997power}. Its generalization, discovered by Shor, showed that factorization of large composite integers can be done exponentially faster than the best known classical algorithms~\cite{shor1999polynomial}, with significant implications for cybersecurity and cryptography. The inception of quantum computation harks back to Feynman's 1981 paper, which argued that simulation of quantum matter would be exponentially hastened by a quantum computer~\cite{feynman1986quantum}. Grover's algorithm and Simon's algorithm have provable quadratic and exponential speed up, respectively, against classical algorithms in terms of the query complexity. However, the exponential separation of the power of quantum computers over the classical ones in terms of computational complexity has not been proved to date. In fact, the refutation of ECTT remains a major open problem. 
What would be a good task that would establish the exponential separation in the near-term? 

Modern quantum supremacy proposals are based on the hardness of sampling.
The hardness of sampling relies on complexity theoretical assumptions that pre-date quantum computing.
It is known that there is no classical efficient algorithm sampling outputs of the worst-case quantum circuit unless the polynomial hierarchy collapses~\cite{bremner2011classical,terhal2002adaptive}. Related to the sampling problem, the computation of output probability for the worst-case quantum circuit is classically hard~\cite{fenner1998determining,terhal2002adaptive}. 
The first proposal for demonstrating sampling-based quantum supremacy by a near-term quantum computer was the original BosonSampling paper of Aaronson and Arkhipov~\cite{aaronson2011computational} in which they showed that producing samples from a distribution that mimics the distribution of a linear optical system is classically hard. Later, Bremner et al showed that a class of circuits known as IQP circuits are also classically hard to sample from \cite{bremner2016average}. The foremost candidate for demonstrating quantum supremacy has been the so-called Random Circuit Sampling (RCS) problem~\cite{boixo2018characterizing}, which states that for any classical computer it is hard to produce samples from a distribution that is close to the distribution of a local quantum circuit whose local gates are randomly and independently are drawn uniformly from the space of all possible gates. 

Demonstration of quantum supremacy is ultimately given by an experiment for which there is solid complexity theoretical evidence of hardness of the task at hand. Indeed Google did an experiment that involved a random circuit with $53$ qubits to demonstrate the hardness of RCS~\cite{arute2019quantum}. Soon after new classical algorithms emerged that challenged the claim~\cite{napp2019efficient,huang2020classical, pednault2019leveraging}. It remains an open problem to mathematically prove the hardness of sampling. A fruitful approach is to prove the hardness of sampling by proving the hardness of approximating probability amplitudes of the quantum circuit.
In particular, if the probability amplitudes obey an anti-concentration property~\cite{harrow2018approximate,dalzell2020random}, then one can use Stockmeyer's algorithm \cite{stockmeyer1985approximation} to prove that it is sufficient to prove that the amplitudes are hard to approximate to within $2^{-n}/\text{poly}(n)$ additive error. 

The first theoretical evidence for the hardness of computing the output probabilities was given by Bouland et al~\cite{bouland2018quantum}, who showed that the computation of the amplitudes of a {\it non-unitary} approximation of the actual quantum circuit is hard unless the polynomial hierarchy collapses. In~\cite{movassagh2020quantum} the Cayley path was introduced, which is a unitary matrix-valued path. It was shown that the exact probability amplitudes of the (i.e.,~unitary) random quantum circuit is $\#\mathsf{P}$-hard, and that even approximating the amplitudes to within $2^{-m^c}$, where $c$ is a quantified constant and $m$ is the number of gates, remains $\#\mathsf{P}$-hard. The validity of the hardness with respect to additive error approximation is referred to as {\it robustness}.

\subsection{Summary of this work}
In this work we substantially~(super-polynomially) improve the robustness to $2^{-\Omega(m\log{m})}$, where $m$ is the number of gates. Therefore, our result proves that approximating the probability amplitudes to within $2^{-\Omega(n\log{n})}$ is hard for constant depth circuits.  In order to use Stockmeyer's algorithm to prove hardness of sampling from the hardness of approximating probability amplitudes with respect to additive errors, one needs a further property of anti-concentration. This property has been proved for circuits of depths $\log{n}$~\cite{dalzell2020random} and $\sqrt{n}$~\cite{harrow2018approximate}. Our robustness bound for circuits of depth $\log{n}$ and $\sqrt{n}$ is $2^{-\Omega\left(n\log^2n\right)}$ and $2^{-\Omega(n^{3/2}\log{n})}$ respectively. 

In proving this result, we took an entirely a new approach that does not follow the standard techniques of the past~\cite{aaronson2011computational,bouland2018quantum,movassagh2018efficient}.
Instead of using standard paths such as $\theta X+(1-\theta)Y$ in BosonSampling
~\cite{aaronson2011computational}, or the truncation of the Taylor series used in~\cite{bouland2018quantum}, we rely on the Cayley path introduced in~\cite{movassagh2020quantum}. Furthermore, instead of Paturi's lemma~\cite{paturi1992degree}, which has become standard in the field for bounding the polynomial extrapolation errors~\cite{aaronson2011computational,bouland2018quantum}, and Rakhmanov's result~\cite{rakhmanov2007bounds}, which is usually used to extend error bounds on a disrete set of points to a uniform error bound in a region~\cite{bouland2018quantum}, we use the Lagrange polynomials for estimating the error of polynomial extrapolation. In fact we do not use the well-known Berlekamp--Welch algorithm~\cite{welch1986error}, which has instabilities in the presence of uniform noise.

We prove two theorems for the hardness of RCS that complement one another by using oracles of different strengths yet requiring different success probabilities (Theorems \ref{thm:strong} and \ref{thm:weak}).
We prove our hardness results from first principles and hope that the new approach helps to overcome the insurmountable difficulties that the standard techniques meet. 
Our main results are the following two theorems:
\begin{thm}[Simplified]\label{thm:strong} It is $\sharpP$-hard under $\mathsf{BPP}$-reduction
to approximate $|\braket{0|C|0}|^2$ to within the additive error $2^{-\Omega(m \log m)}$ for $1-O(1/m)$ fraction of quantum circuits $C$.
\end{thm}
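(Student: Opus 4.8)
The plan is a worst-case-to-average-case reduction built on the Cayley path of~\cite{movassagh2020quantum}, carrying out the polynomial-extrapolation step by explicit Lagrange interpolation on the real axis so that neither the Berlekamp--Welch decoder, nor Paturi's lemma, nor Rakhmanov's theorem is invoked. Fix a family of $m$-gate circuits $C^*$ over a fixed finite gate set with dyadic-rational entries for which $p_{C^*}:=|\braket{0|C^*|0}|^2$ is $\sharpP$-hard to compute exactly; by the standard constructions~\cite{fenner1998determining,terhal2002adaptive} we may take $p_{C^*}\in 2^{-h}\ZZ$ with $h=O(m)$. Given the gates $u_1^*,\dots,u_m^*$ of $C^*$, draw random gates $h_1,\dots,h_m$ (with $h_j$ of the same arity as $u_j^*$) from the random-circuit (Haar) ensemble, and for each $j$ let $u_j(\theta)$ be the Cayley interpolation between $h_j$ (at $\theta=0$) and $u_j^*$ (at $\theta^*:=1$). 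Put $C(\theta):=u_m(\theta)\cdots u_1(\theta)$ in the same architecture; then $C(0)$ is exactly a random circuit of the RCS ensemble and $C(\theta^*)=C^*$.

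Each $u_j(\theta)$ is a matrix of rational functions of $\theta$ with common denominator $d_j(\theta)$ and numerator entries of degree $O(1)$, with $d_j$ computable from $h_j$; hence $\braket{0|C(\theta)|0}=N(\theta)/D(\theta)$ with $\deg N,\deg D\le D_0:=O(m)$ and $D=\prod_j d_j$ known. For real $\theta$,
\[
p(\theta):=|\braket{0|C(\theta)|0}|^2=\frac{R(\theta)}{S(\theta)},\qquad R(\theta):=|N(\theta)|^2,\quad S(\theta):=|D(\theta)|^2,
\]
where $R,S$ are \emph{real} polynomials of degree at most $2D_0=O(m)$, $S$ is computable in $\poly(m)$ time, $S(\theta)\ge 1$ for all real $\theta$ (the Cayley denominators have modulus $\ge 1$), and $\max_{\theta\in[0,1]}S(\theta)\le 2^{O(m\log m)}$ with probability $1-\mathrm{negl}(m)$ over the $h_j$ (the Cayley generators have $\poly(m)$-bounded norm once the $h_j$ are suitably well conditioned). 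Moreover, when the $h_j$ are Haar the law of $C(\theta)$ is within total-variation distance $O(m\theta)$ of the RCS ensemble.

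Set $L:=2D_0+1=O(m)$ and choose equally spaced points $\theta_1<\dots<\theta_L$ in $(0,\Delta]$ with $\Delta=\Theta(1/m^2)$ small enough that $L\cdot O(m\Delta)<1/6$. Query the hypothesized oracle on $C(\theta_1),\dots,C(\theta_L)$, all built from the same $h_j$. Since the oracle errs on at most an $O(1/m)$-fraction of RCS circuits and each $C(\theta_i)$ is $O(m\Delta)$-close to that ensemble, a union bound over the $L$ instances shows that with probability $\ge 2/3$ over the $h_j$ \emph{every} query returns $\tilde p_i$ with $|\tilde p_i-p(\theta_i)|\le\varepsilon:=2^{-\Omega(m\log m)}$; there are no outliers, so no error-correcting decoder is needed. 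With $\ell_i(\theta):=\prod_{k\ne i}(\theta-\theta_k)/(\theta_i-\theta_k)$ the Lagrange basis, form $\tilde R(\theta):=\sum_i \tilde p_i\,S(\theta_i)\,\ell_i(\theta)$; since $\deg R\le L-1$ one has $R=\sum_i R(\theta_i)\,\ell_i$ identically, hence
\[
|\tilde R(\theta^*)-R(\theta^*)|\;\le\;\varepsilon\,\Big(\max_{\theta\in[0,1]}S(\theta)\Big)\sum_{i=1}^{L}|\ell_i(\theta^*)| .
\]
The one remaining estimate is the Lebesgue-type constant $\Lambda:=\sum_i|\ell_i(\theta^*)|$ for extrapolation from the window $(0,\Delta]$ to $\theta^*=1$: for equally spaced nodes of gap $\Delta/L$ a direct computation gives $|\ell_i(\theta^*)|\le (L/\Delta)^{L-1}/(\lfloor L/2\rfloor!)^2$ up to a $\poly(m)$ factor, which for $L=O(m)$, $\Delta=\Theta(1/m^2)$ is $2^{O(m\log m)}$, so $\Lambda\le 2^{O(m\log m)}$. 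Using $S(\theta^*)\ge 1$, we get $|p(\theta^*)-\tilde R(\theta^*)/S(\theta^*)|\le|\tilde R(\theta^*)-R(\theta^*)|\le\varepsilon\cdot 2^{O(m\log m)}$, which is below $2^{-h-2}=2^{-O(m)}$ once $\varepsilon\le 2^{-cm\log m}$ for a suitable constant $c$; rounding $\tilde R(\theta^*)/S(\theta^*)$ to the nearest multiple of $2^{-h}$ then outputs $p_{C^*}$ exactly.

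The whole procedure is randomized polynomial time, makes $L=O(m)$ oracle calls with $\poly(m)$-bit arithmetic, and succeeds with probability $\ge 2/3$; therefore approximating $|\braket{0|C|0}|^2$ to additive error $2^{-\Omega(m\log m)}$ on a $1-O(1/m)$ fraction of circuits is $\sharpP$-hard under $\mathsf{BPP}$-reduction. The main obstacle --- and the source of the $\log m$ in the robustness --- is this extrapolation bound: the sampling window must be as narrow as $\Delta=\Theta(1/m^2)$ for a union bound over $L=\Theta(m)$ near-ensemble instances to still leave constant success probability (which is exactly why only an $O(1/m)$ error fraction is tolerable), yet extrapolating across a $\Theta(1)$ gap from a window that narrow inflates sampling errors by $\Lambda=2^{\Theta(m\log m)}$; balancing these two, together with the $2^{O(m\log m)}$ bound on $S$ over $[0,1]$ and the $2^{-O(m)}$ denominator of $p_{C^*}$, is what forces --- and is met by --- tolerance $2^{-\Omega(m\log m)}$. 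Doing this on the real line with explicit Lagrange polynomials, rather than Paturi's lemma on the disk plus Rakhmanov for a uniform bound and rather than Berlekamp--Welch, is what keeps the accounting tight enough to beat the super-polynomial losses of the earlier approach.
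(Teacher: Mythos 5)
Your proposal is correct and follows essentially the same route as the paper: a Cayley-path worst-to-average reduction, a total-variation/union-bound argument over $O(m)$ query points in a $\Theta(m^{-2})$-wide window (the paper uses $\Delta=\Theta(m^{-k})$, $k>2$, via Lemma~\ref{lem:tvd} and Lemma~\ref{lem:strong_oracle}), an explicit Lagrange-extrapolation bound of $2^{O(m\log m)}$ (Lemma~\ref{lem:Robustness}), and the $2^{-O(m)}$ resolution of the worst-case $\sharpP$ instance from the Fenner-type construction (Lemma~\ref{lem:SharpP}). The only differences are cosmetic: you keep the unnormalized Cayley denominator $S$ (so your norm-boundedness claim should read probability $1-1/\poly(m)$, not $1-\mathrm{negl}(m)$, which still suffices), whereas the paper normalizes $|Q(x)|^2$ so that it is automatically $1+O(m\Delta)$ at the query points.
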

\begin{thm}[Simplified]\label{thm:weak} It is $\#\mathsf{P}$-hard under $\mathsf{BPP}^\mathsf{NP}$-reduction
to approximate $|\braket{0|C|0}|^2$ to within the additive error $2^{-\Omega(m \log m)}$ for $\frac34 + \frac1{\poly(n)}$ fraction of quantum circuits $C$.
\end{thm}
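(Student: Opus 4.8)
The plan is to combine three ingredients: (a) the worst-case $\sharpP$-hardness of computing an exact output probability, (b) the Cayley interpolation path of~\cite{movassagh2020quantum}, which joins a worst-case circuit to the random-circuit ensemble along a one-parameter family of \emph{unitary} circuits, and (c) a robust polynomial-extrapolation step that uses the $\NP$ oracle together with explicit Lagrange-interpolation error bounds. Fix a circuit $C^{*}$ in the relevant architecture whose amplitude $|\braket{0|C^{*}|0}|^{2}$ encodes a $\sharpP$-hard quantity; concretely, after multiplying by a known constant $2^{O(m)}$ this quantity is a nonnegative integer. The Cayley path replaces each gate $u_{j}$ by a rational matrix-valued function $u_{j}(\theta)$ of a single real parameter so that: (i) $C(1)=C^{*}$ while $C(0)$ is distributed exactly as the target ensemble; (ii) $g(\theta)^{\kappa m}\,|\braket{0|C(\theta)|0}|^{2}=:P(\theta)$ is a polynomial of degree $D=O(m)$ for an explicit circuit-independent $g$ (e.g.\ $g(\theta)=1+\theta^{2}$, $g(1)=2$); and (iii) for $|\theta|$ small the law of $C(\theta)$ is within total-variation distance $O(m|\theta|)$ of the ensemble.

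The $\BPP^{\NP}$ reduction then runs as follows. Given $C^{*}$, draw the random gates once (fixing the whole path), sample $k=\poly(n)$ points $\theta_{1},\dots,\theta_{k}$ independently and uniformly from $[0,\Delta]$ with $\Delta$ small enough that $O(m\Delta)\le\varepsilon/4$, where $\varepsilon=1/\poly(n)$ is the excess of the oracle's success fraction over $\tfrac34$. Query the oracle at each $C(\theta_{i})$ to obtain $\tilde p_{i}$, form $\tilde P_{i}=g(\theta_{i})^{\kappa m}\tilde p_{i}$, and ask the $\NP$ oracle for a degree-$D$ polynomial $\hat P$ (coefficients on a sufficiently fine grid) that is $\eta'$-consistent with the \emph{maximum possible} number of the pairs $(\theta_{i},\tilde P_{i})$, where $\eta'=\eta\cdot\max_{i}g(\theta_{i})^{\kappa m}$ and $\eta=2^{-\Omega(m\log m)}$ is the oracle's additive error. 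Output $\hat P(1)/2^{\kappa m}$, rescaled by $2^{O(m)}$ and rounded to the nearest integer; finally repeat the whole procedure $\poly(n)$ times with fresh randomness and return the majority answer.

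For correctness, let $q$ be the Lebesgue fraction of $[0,\Delta]$ on which the oracle errs along the sampled path. By Fubini and property~(iii), $\mathbb{E}[q]\le(\tfrac14-\varepsilon)+O(m\Delta)\le\tfrac14-\tfrac{\varepsilon}{2}$, so by Markov's inequality $\Pr[q<\tfrac12-\tfrac{\varepsilon}{4}]\ge\tfrac12+\Omega(\varepsilon)$. On that event, a Chernoff bound over the independent $\theta_{i}$ (together with the mild event, costing only $O(\varepsilon)$ in probability, that no two of the $\theta_{i}$ are anomalously close) shows that fewer than $(k-D-1)/2$ of the sampled points are ``bad''. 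Hence $P^{*}$ and any maximum-consistency $\hat P$ are $\eta'$-consistent with the same set of at least $D+1$ sample points, on which $|\hat P-P^{*}|\le 2\eta'$; expanding the degree-$D$ polynomial $\hat P-P^{*}$ in the Lagrange basis on those points and evaluating at $\theta=1$ gives $|\hat P(1)-P^{*}(1)|\le 2\eta'\Lambda$, where
\[
\Lambda:=\max_{S:\,|S|=D+1}\ \sum_{i\in S}\ \left|\prod_{j\in S\setminus\{i\}}\frac{1-\theta_{j}}{\theta_{i}-\theta_{j}}\right|.
\]
Since the points lie in $[0,\Delta]$ with $|S|=D+1=O(m)$ and $\Delta=1/\poly(m)$, a direct estimate gives $\log_{2}\Lambda=O(D\log(k/(\varepsilon\Delta)))=O(m\log m)$, so the error in the recovered integer is $2\eta\Lambda\cdot 2^{O(m)}=2^{-\Omega(m\log m)}\cdot 2^{O(m\log m)}$, which is below $\tfrac12$ once the constant in $\eta=2^{-\Omega(m\log m)}$ is chosen large enough; the rounding then recovers the $\sharpP$ value exactly. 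Thus each repetition is correct with probability $\ge\tfrac12+\Omega(\varepsilon)$ while any \emph{fixed} wrong value is output with probability $<\tfrac12$, so the majority vote over $\poly(n)$ repetitions is correct with overwhelming probability.

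The main obstacle is the sharp bound $\Lambda=2^{O(m\log m)}$ on the Lagrange extrapolation coefficients: this estimate is exactly what pins down the $2^{-\Omega(m\log m)}$ robustness, and obtaining it requires avoiding the lossy combination of Paturi's lemma and Rakhmanov's theorem used in prior work and instead bounding the extrapolation error directly via Lagrange polynomials, carefully balancing the interval width $\Delta$ (bounded from above by the total-variation estimate~(iii)) against the degree $D=O(m)$ and the blow-up $(\mathrm{diam}/\Delta)^{D}$. A secondary difficulty, specific to this weaker oracle, is pushing the per-run success probability strictly above $\tfrac12$ starting from an oracle that is correct on only a $\tfrac34+1/\poly$ fraction of circuits; this rules out the naive union bound over the sample points and forces both the maximum-consistency decoding (rather than a fixed acceptance threshold) and the Fubini--Markov argument above, after which standard amplification completes the proof.
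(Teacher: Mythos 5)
Your proposal is correct and follows essentially the same route as the paper: worst-case $\sharpP$-hardness of the exact output probability (Lemma~\ref{lem:SharpP}), the Cayley path with a known multiplier turning the amplitude into a degree-$O(m)$ polynomial, $\poly$ many oracle queries near the random end, an \NP-oracle to find a polynomial consistent with a majority of the noisy points so that it shares $d+1$ points with the true polynomial, a direct Lagrange-extrapolation bound of $\exp[O(m\log m)]$ (Lemmas~\ref{lem:Robustness}--\ref{lem:Linterpolate}), a Markov-type argument giving per-run success $\tfrac12+\Omega(\delta)$, and amplification (Lemma~\ref{lem:weak_oracle}). The only deviations are cosmetic and harmless: you use i.i.d.\ random sample points plus a minimum-gap event where the paper uses an equally-spaced grid, maximum-consistency decoding with coefficient extraction where the paper uses a consistency threshold plus binary search on $\tilde p(1)$, and your normalizer is in fact not circuit-independent --- the correct multiplier is the circuit-dependent but efficiently computable $|Q(x)|^2$ of Eq.~\eqref{eq:Q_Theta-1-1}, which changes nothing in your argument since only its computability and bounds are used.
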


In other word, unless the polynomial hierarchy collapses to finite level, the above tasks are outside the polynomial hierarchy.
We summarize the proof structures that culminate in Theorems \ref{thm:strong} and \ref{thm:weak} in Figures \ref{fig:strong} and \ref{fig:weak} respectively.

\begin{figure*}[t]
    \centering
        \begin{tikzpicture}
        \node[rectangle,draw] (a) at (0, 6) {\begin{tabular}{c}
             1.~Random Circuit Sampling
        \end{tabular}};
        \node[rectangle,draw, label=right:$\therefore\; \sharpP$-hard under $\mathsf{BPP}$-reduction] (b) at (0, 3) {
                \begin{tabular}{l} 2.~Computing $|\braket{0|C|0}|^2$ on average\\with prob. $1-\frac{1}{\Omega(m)}$ to within\\
                additive error $2^{-\Omega(m\log m)}$
                \end{tabular}};
        \node[rectangle,draw,label=right:$\;\therefore\sharpP$-hard under $\mathsf{P}$-reduction] (c) at (0, 0) {
        \begin{tabular}{l} 3.~Computing $|\braket{0|C|0}|^2$ in the\\ worst case to within additive\\
                 error $2^{-\Omega(m\log m)}$
                \end{tabular}
            };
        \node[rectangle,draw] (d) at (0, -2.5) {4.Any  $\sharpP$ problem};
        \path[draw,->,dashed, thick] (b) -> (a) node[midway,right,text width=12cm] {
            {$\mathsf{BPP}^{\mathsf{NP}}$-reduction [Stockmeyer 1985]\\
            Available only when the additive error in Box 2 is $2^{-n}/\poly(n)$.}
            };
        \path[draw,->,thick] (c) -> (b) node[midway,right] {\begin{tabular}{c}
             $\mathsf{BPP}$-reduction:
             Lemma \ref{lem:strong_oracle}
        \end{tabular}};
        \path[draw,->,thick] (d) -> (c) node[midway,right] {$\mathsf{P}$-reduction: Lemma \ref{lem:SharpP}};
        \coordinate (p1) at (-4.5, 1);
        \path[draw,->, thick] (d.west) -| (p1) |- (b.west) node[left,midway] {Theorem \ref{thm:strong}};;
    \end{tikzpicture}
    \caption{The proof structure and reductions for Theorem \ref{thm:strong}}
    \label{fig:strong}
\end{figure*}
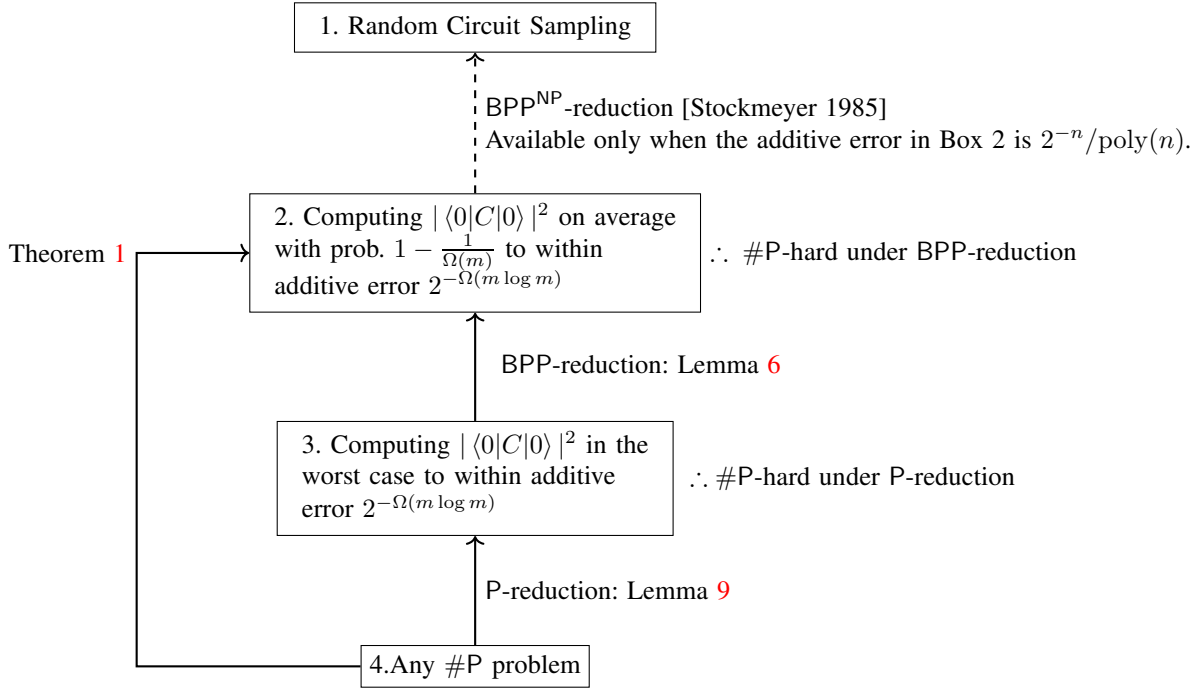
\begin{figure*}[t]
    \centering
        \begin{tikzpicture}
        \node[rectangle,draw] (a) at (0, 6) {\begin{tabular}{c}
             1.~Random Circuit Sampling
        \end{tabular}};
        \node[rectangle,draw, label=right:$\therefore\; \mathsf{\#P}$-hard under $\textsf{BPP}^\textsf{NP}$ reductions] (b) at (0, 3) {
                \begin{tabular}{l} 2.~Computing $|\braket{0|C|0}|^2$ on average\\ with prob. $3/4+1/\text{poly}(n)$ to within\\
                additive error $2^{-\Omega(m\log m)}$
                \end{tabular}};
        \node[rectangle,draw,label=right:$\therefore\;\mathsf{\#P}$-hard under $\mathsf{P}$-reduction] (c) at (0, 0) {
        \begin{tabular}{l} 3.~Computing $|\braket{0|C|0}|^2$ in the\\ worst case to within additive\\
                 error $2^{-\Omega(m\log m)}$
            \end{tabular}
            };
        \node[rectangle,draw] (d) at (0, -2.5) {4.Any $\mathsf{\#P}$ problem};
        \path[draw,->,dashed, thick] (b) -> (a) node[midway,right,text width=12cm] {
            {$\mathsf{BPP}^{\mathsf{NP}}$ reduction [Stockmeyer 1985]\\ Available only when the additive error in Box 2 is $2^{-n}/\poly(n)$.}
            };
        \path[draw,->,thick] (c) -> (b) node[midway,right] {\begin{tabular}{c}
             $\mathsf{BPP}^\mathsf{NP}$-reduction:
             Lemma \ref{lem:weak_oracle}
        \end{tabular}};
        \path[draw,->,thick] (d) -> (c) node[midway,right] {$\mathsf{P}$-reduction: Lemma \ref{lem:SharpP}};
        \coordinate (p1) at (-4.5, 1);
        \path[draw,->, thick] (d.west) -| (p1) |- (b.west) node[left,midway] {Theorem \ref{thm:weak}};;
    \end{tikzpicture}
    \caption{The proof structure and reductions for Theorem \ref{thm:weak}}
    \label{fig:weak}
\end{figure*}
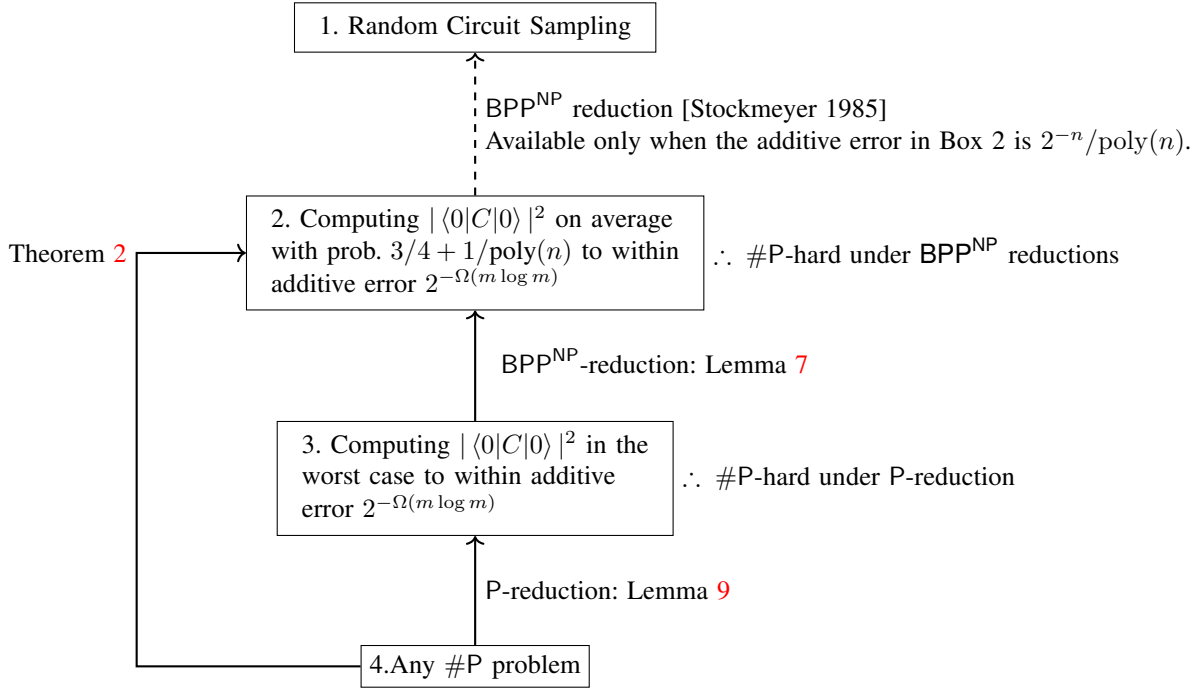

We then show that the Cayley path can be utilized in much the same way to prove that any circuit with the same architecture as the worst-case circuit also has the same hardness properties as shown above. A perhaps surprising corollary is that sampling from circuits close to identity as $\# P$-hard. 
\begin{thm}[Simplified]\label{thm:fixedU} Theorems \ref{thm:strong} and \ref{thm:weak} hold in the case that the circuit $C$ is within any open neighborhood of a fixed circuit. This also applies to the trivial circuit $C=I$ with identity gates (see Corollary \ref{cor:id}.)
\end{thm}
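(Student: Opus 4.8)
The plan is to re-run the polynomial-interpolation argument behind Lemmas~\ref{lem:strong_oracle} and~\ref{lem:weak_oracle} with the Cayley path \emph{re-anchored} at the fixed circuit $C_0=V_m\cdots V_1$ rather than at a Haar-random circuit. Let $\mathcal N$ be the prescribed open neighborhood, so that $C_0\in\mathcal N$ sits at some fixed distance $\delta_{\mathcal N}>0$ from the boundary. Given a Haar-random circuit $D=d_m\cdots d_1$, write $W_j:=V_j^\dagger d_j$ (again Haar-random) and let $h_j$ be its Cayley generator; with probability $1-O(1/m)$ every $W_j$ stays bounded away from the singular locus of the Cayley transform, so $\|h_j\|\le\poly(m)$, and we simply discard the exceptional $O(1/m)$ fraction of $D$. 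For an independent family of small random perturbations $g_j$ (bounded Cayley generators of Haar gates) and a scale $s=1/\poly(m)$, set
\[
  u_j(\theta):=V_j\,\mathrm{Cayley}\big((1-\theta)\,s\,g_j+\theta\,h_j\big),\qquad C(\theta):=u_m(\theta)\cdots u_1(\theta).
\]
Then $C(0)=V_m\mathrm{Cayley}(sg_m)\cdots V_1\mathrm{Cayley}(sg_1)$ is a small random perturbation of $C_0$, $C(1)=(V_mW_m)\cdots(V_1W_1)=D$, and, since left-multiplication by the fixed $V_j$ changes no polynomial degrees, $\theta\mapsto|\braket{0|C(\theta)|0}|^2$ is (after clearing its explicitly known denominator) a polynomial of the same $O(m)$ degree used in the proofs of Theorems~\ref{thm:strong} and~\ref{thm:weak}. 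Let $\mathcal D_{\mathcal N}$ be the law of $C(\theta)$ for $\theta$ uniform in a short interval $[0,\Delta]$: it is efficiently samplable and supported inside $\mathcal N$, and for all $V_j=I$ it is a distribution concentrated near the identity, which gives Corollary~\ref{cor:id}.

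Given an oracle for the $\mathcal D_{\mathcal N}$-average problem, the reduction is the usual one: to evaluate $|\braket{0|D|0}|^2$ one samples $O(m)$ nodes $\theta_1,\dots,\theta_k$ uniformly from $[0,\Delta]$, queries the oracle at the circuits $C(\theta_\ell)$ (each marginally $\mathcal D_{\mathcal N}$-distributed), rescales by the known denominator to recover noisy values of the degree-$O(m)$ polynomial, and Lagrange-extrapolates to $\theta=1$. For Theorem~\ref{thm:strong} the oracle is correct on a $1-O(1/m)$ fraction, and a union bound over the $O(m)$ nodes together with a median over $O(\log m)$ fresh perturbation families $g_j$ --- exactly as in Lemma~\ref{lem:strong_oracle} --- yields $|\braket{0|D|0}|^2$ for a $1-O(1/m)$ fraction of $D$, which is $\sharpP$-hard by Theorem~\ref{thm:strong}. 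For Theorem~\ref{thm:weak} one instead samples $\poly(m)$ nodes and applies the robust $\mathsf{BPP}^{\mathsf{NP}}$ decoding of Lemma~\ref{lem:weak_oracle} verbatim, since that step is insensitive to where the path is anchored; noting that $\mathsf{BPP}^{\mathsf{NP}}$-reductions compose and chaining with Lemma~\ref{lem:SharpP} finishes both cases.

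The one place where re-anchoring genuinely costs something --- and what I expect to be the only real work --- is verifying that the interpolation nodes can do two jobs at once: their circuits $C(\theta_\ell)$ must lie \emph{inside} $\mathcal N$, and they must be spread enough that the Lagrange-extrapolation error to $\theta=1$ stays $2^{O(m\log m)}$, matching the accuracy $2^{-\Omega(m\log m)}$ postulated in the theorems. For the first, $\|C(\theta)-C_0\|\le\sum_j\|\mathrm{Cayley}((1-\theta)sg_j+\theta h_j)-I\|\le 2\sum_j\big(s\|g_j\|+\theta\|h_j\|\big)\le \poly(m)\,(s+\theta)$, so $s$ and $\Delta$ of size $1/\poly(m)$ confine the whole family to $\mathcal N$. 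For the second, $O(m)$ nodes spread over an interval of length $1/\poly(m)$ with a degree-$O(m)$ target give an error-amplification factor $\big(\poly(m)\big)^{O(m)}=2^{O(m\log m)}$. The point is that $\delta_{\mathcal N}$ is a fixed constant while every competing quantity is $\poly(m)$, so both constraints hold simultaneously and at exactly the $1/\poly(m)$ scale of interpolation interval already used in the proofs of Theorems~\ref{thm:strong} and~\ref{thm:weak}; hence the robustness bound carries over unchanged.
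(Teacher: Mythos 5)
Your construction of the path itself is fine (the re-anchored Cayley path has the right $O(m)$ degree, a computable denominator, and your norm bound $\|\mathrm{Cayley}(A)-I\|\le 2\|A\|$ keeps the queries in the neighborhood), but the overall architecture of your reduction diverges from the paper's in a way that opens a genuine gap. The paper anchors the path at the fixed circuit and runs it \emph{directly to the worst-case $\sharpP$-hard circuit}: $U_k(\theta)=U_kf(\theta h_k)$ with $f(h_k)=U_k^\dagger C_k$, closing with the worst-case hardness of Lemma~\ref{lem:SharpP}; the only new ingredient is a global phase rotation $C_k\mapsto e^{i\varphi_k}C_k$ (which changes no output probability and no hardness) that keeps the eigenvalues of $U_k^\dagger C_k$ bounded away from $-1$, so $\|h_k\|_\infty=O(1)$ deterministically and no inputs are discarded. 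You instead run the path to a Haar-random endpoint $D$ and then invoke Theorems~\ref{thm:strong} and~\ref{thm:weak} as black boxes. That chaining step does not survive the success-fraction bookkeeping. In the strong case, the neighborhood oracle may err on an $O(1/m)$ fraction of $\mathcal{D}_{\mathcal N}$, and you issue $\Theta(m)$ queries per input $D$; the probability that some query hits the oracle's bad set is therefore a constant, and part of that failure lives in the choice of $D$: an adversarial oracle can place, for a constant fraction of the $D$'s, a bad set of conditional measure $\Theta(1/m)$ inside the conditional support of your queries (total measure still $O(1/m)$), making the conditional failure probability at least $1/2$ for those $D$. Your median over fresh $g_j$ re-randomizes only $(g,\theta)$, not $D$ or $h_j$, so it cannot repair those inputs. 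Hence you solve the Haar problem on only a $1-\Omega(1)$ fraction of $D$, whereas Theorem~\ref{thm:strong} asserts hardness only of the $1-O(1/m)$-fraction problem, so the composition gives nothing. The weak case has the same defect: your simulated Haar-solver succeeds with probability roughly $\tfrac12+\delta$ over the \emph{joint} randomness of $D$ and the reduction, which does not meet the $\tfrac34+\frac1{\poly(n)}$ hypothesis of Theorem~\ref{thm:weak}; ``$\mathsf{BPP}^{\mathsf{NP}}$-reductions compose'' does not repair fractions of inputs.

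The paper avoids this entirely because its extrapolation target is a single worst-case circuit: the average-case success fraction is spent exactly once (at the oracle, over the distribution near the fixed circuit), and the residual failure is over internal randomness only, which can be boosted for a fixed instance. To fix your argument you would have to re-target the endpoint at $\theta=1$ to the worst-case circuit and rerun Lemmas~\ref{lem:strong_oracle} and~\ref{lem:weak_oracle} with the distribution anchored at $C_0$ --- which is exactly the paper's proof; at that point your probabilistic bound $\|h_j\|\le\poly(m)$ with an $O(1/m)$ discarded fraction is also better replaced by the paper's phase trick, which gives $\|h_k\|_\infty=O(1)$ with no exceptions and hence a neighborhood of radius $O(\Delta)$ rather than $\Delta\cdot\poly(m)$. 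A secondary detail you would also need to reconcile: your claim that each query is ``marginally $\mathcal{D}_{\mathcal N}$-distributed'' holds for $\theta$ drawn uniformly from $[0,\Delta]$, but the decoding step behind Lemma~\ref{lem:weak_oracle} uses the structured node set of Lemma~\ref{lem:Linterpolate}, so the distribution under which the oracle is assumed to succeed and the node placement must be stated consistently.
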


We remark that Bouland et al \cite{bouland2021noise} claim the same robustness as our Theorem~\ref{thm:weak} and arrived at it independently. Our respective papers are 
different in details. 
We use 
Lagrange interpolation for bounding the errors induced by polynomial extrapolation, which gives simple and direct proofs.
More complicated techniques were needed in~\cite{bouland2021noise} for bounding the errors due to polynomial extrapolation, which they call Lagrang{\it ian} interpolation.

\subsection{Open problems and future work}
Our results can be applied to BosonSampling. We hope to see extension and application of these techniques to the complexity of BosonSampling~\cite{KondoBoson2021} for which there was a recent experimental breakthrough \cite{zhong2020quantum}.

Applications of the Cayley path for randomizing quantum gates have found use in other contexts~\cite{oszmaniec2020fermion} and it would be interesting to see fresh new applications.

The main open problem is to prove the hardness of sampling for random quantum circuits. In particular, improving our additive error robustness bounds to $2^{-n}/\poly(n)$ for random circuits that have the anti-concentration property would be sufficient. 
The overarching goal of proving the quantum supremacy conjecture is achieved (i.e.,  Corollary~\ref{cor:QSupremacy} below is proved) if the following conjecture is proved in the affirmative:
\begin{conj}\label{conj:sampling_is_in_ph}
Approximating $|\braket{0|C|0}|^2$ with probability of $\frac{3}{4}+\frac{1}{\poly(n)}$ over the choice of quantum circuits $C$ to within the additive error $2^{-n}/\poly(n)$ implies the collapse of the polynomial hierarchy to a finite level.
\end{conj}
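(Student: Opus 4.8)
The plan is to keep the reduction chain of Figure~\ref{fig:weak} intact and attack its only missing quantitative ingredient: upgrading Lemma~\ref{lem:weak_oracle} (the worst-to-average step, Box~3 $\to$ Box~2) so that an average-case oracle with additive accuracy $2^{-n}/\poly(n)$ and success probability $3/4+1/\poly(n)$ already suffices. The worst-case step Box~4 $\to$ Box~3 (Lemma~\ref{lem:SharpP}) can be arranged to survive this coarser accuracy: encode $\#\mathrm{SAT}$ on $O(n)$ variables into a $\poly(n)$-gate circuit $C^{*}$ so that $|\braket{0|C^{*}|0}|^{2}$ equals the satisfying-assignment count divided by a power of two with exponent $O(n)$, whence a $2^{-n}/\poly(n)$ additive estimate pins the count exactly. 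The closing step Box~2 $\to$ Box~1 is Stockmeyer's algorithm together with anti-concentration (valid at depth $\log n$ and $\sqrt n$), which is precisely the regime in which $2^{-n}/\poly(n)$ is the natural target, giving $\#\mathsf{P}\subseteq\mathsf{BPP}^{\mathsf{NP}}\subseteq\mathsf{PH}$ and hence collapse by Toda's theorem (this is Corollary~\ref{cor:QSupremacy}). So everything reduces to a sharper robustness analysis of the Cayley-path interpolation.

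Concretely, I would form the Cayley path $C(\theta)$ with $C(0)=C^{*}$ and $C(1)$ Haar-random, query the oracle at $K$ nodes $\theta_{1},\dots,\theta_{K}$ inside a window $[1-w,1]$ on which $C(\theta_{k})$ stays within total-variation distance $o(1/\poly(n))$ of Haar, and recover the degree-$D$ denominator-cleared polynomial $q$ with $q(0)=|\braket{0|C^{*}|0}|^{2}$ by error-tolerant Lagrange interpolation, using a $\mathsf{BPP}^{\mathsf{NP}}$ subroutine to isolate the $\ge 3/4$-fraction of uncorrupted nodes. To beat $2^{-\Omega(m\log m)}$ one must shrink the two independent sources of loss: the number of bits recovered per interpolation instance, and the noise amplification $\max_{k}|\ell_{k}(0)|$ of extrapolating from the window to $\theta=0$. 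For the first, I would recover $q(0)\bmod p$ for every prime $p\le\poly(n)$ in a separate run and reconstruct $q(0)$ by the Chinese remainder theorem, so that each instance tolerates relative error $\Theta(1/p)=\Omega(1/\poly(n))$; since the full $\#\mathrm{SAT}$ count is reconstructed, the procedure stays $\#\mathsf{P}$-hard under $\mathsf{BPP}^{\mathsf{NP}}$-reduction. For the second, the target is an amplification of at most $2^{n}\poly(n)$, so that $2^{-n}/\poly(n)$ accuracy at the nodes propagates to an $O(1/\poly(n))$ error at $\theta=0$.

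The hard part — and the reason this is only a conjecture — is exactly this second requirement. With $K=D+1$ nodes packed into a window of width $w$ and evaluated at distance $\approx 1$, the amplification is of order $(\Theta(K)/w)^{\Theta(K)}$; since $C(\theta)$ perturbs all $m=\Theta(D)$ gates simultaneously, keeping $C(\theta)$ near Haar forces $w=O(1/(m\,\poly(n)))$, which already makes the amplification $2^{\Theta(m\log m)}=2^{\Theta(n\log^{2}n)}$ at depth $\log n$, larger than $2^{n}\poly(n)$ by a $2^{\Theta(n\log^{2}n)}$ factor; worse, no choice of $w\le 1$ can bring $(\Theta(m)/w)^{\Theta(m)}$ down to $2^{n}\poly(n)$ once $m=\omega(n/\log n)$. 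Hence a proof cannot merely tighten the constants in the present argument: it must supply a genuinely new ingredient. The candidates I would try, roughly in order, are a unitary path whose near-Haar region has width $\Omega(1)$ rather than $O(1/m)$ (reducing the extrapolation factor to $2^{O(D)}$, which still only works if $D$ can simultaneously be made $O(n)$), an encoding in which the amplitude depends on $\theta$ through a polynomial of effective degree $O(n)$ even though the circuit has $\Theta(n\log n)$ gates, or a noise-robust reconstruction that replaces long-range polynomial extrapolation — and with it Lagrange interpolation and Berlekamp--Welch — by a scheme whose error does not scale like $(\text{extrapolation distance}/\text{node spacing})^{\text{degree}}$. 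A first concrete attempt: shuffle the dependence on $\theta$ across the gates so that recovering $q(0)$ becomes an \emph{interior} interpolation instead of an extrapolation; absent such a device, I expect the width-versus-degree barrier above to be the genuine obstruction.
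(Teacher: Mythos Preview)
This statement is presented in the paper as Conjecture~\ref{conj:sampling_is_in_ph}, not as a theorem; the paper offers no proof and explicitly flags it as the central open problem whose resolution (together with Stockmeyer's theorem and anti-concentration) would yield Conjecture~\ref{cor:QSupremacy}. There is therefore no proof in the paper to compare your attempt against, and you yourself acknowledge that ``this is only a conjecture.''

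Your analysis of \emph{why} the paper's techniques fall short is accurate and matches the paper's own barrier: the Cayley path produces a degree-$\Theta(m)$ polynomial, the total-variation bound of Lemma~\ref{lem:tvd} forces the sampling window to have width $O(1/m)$, and Lagrange extrapolation across a gap of order one then amplifies errors by $2^{\Theta(m\log m)}$, which exceeds the target $2^{n}\poly(n)$ whenever $m=\omega(n/\log n)$ --- in particular at the depths $\log n$ and $\sqrt{n}$ where anti-concentration is known. One minor issue with your Chinese-remainder suggestion: $q(0)=|\braket{0|C^{*}|0}|^{2}$ is a rational of the form $(\text{integer})/2^{\Theta(n)}$ rather than an integer, so ``$q(0)\bmod p$'' is not well-posed as written, and any per-prime re-encoding into a fresh circuit still faces the same extrapolation amplification, so CRT alone does not attack the barrier you have isolated. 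Your list of candidate new ingredients (a path whose near-Haar region has width $\Omega(1)$, an effective degree $O(n)$, or replacing extrapolation by interior interpolation) is a reasonable research agenda, but none of it is carried out here, so the statement remains open exactly as the paper leaves it.
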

Assuming Conjecture~\ref{conj:sampling_is_in_ph} and anti-concentration of output probabilities, we obtain the hardness of RCS via Stockmeyer's theorem.
\begin{thm}[Stockmeyer~\cite{stockmeyer1985approximation}]\label{thm:stockmeyer}
Given a Boolean function $f:\{0,1\}^n\rightarrow \{0,1\}$, let 
\[
p=\Pr_{x\in\{0,1\}^n}[f(x)=1]=2^{-n}\sum_{x\in\{0,1\}^n}f(x).
\]
Then there exists an $\mathsf{FBPP}^{\mathsf{NP}^f}$ machine that approximates $p$ to within any multiplicative factor of $1+1/\poly(n)$.
\end{thm}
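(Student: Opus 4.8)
The plan is to prove this as a standard application of approximate counting by hashing; I sketch it for completeness. Write $S=f^{-1}(1)\subseteq\{0,1\}^n$ and $N=|S|=2^n p$, so it suffices to approximate $N$ to within a multiplicative factor $1+1/\poly(n)$. \textbf{Step 1 (order-of-magnitude estimate).} For each $k\in\{0,1,\dots,n\}$ let $h$ be drawn from a pairwise independent family of hash functions $\{0,1\}^n\to\{0,1\}^k$. The predicate $P_k(h)\defeq[\exists x\colon f(x)=1\wedge h(x)=0^k]$ is decided by an $\NP$ machine that guesses $x$, queries $f(x)$, and evaluates $h(x)$; hence $P_k(\cdot)$ lies in $\NP^f$. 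By pairwise independence $\EE\bigl[\,|\{x\in S:h(x)=0^k\}|\,\bigr]=N2^{-k}$ with variance at most the same quantity, so Chebyshev gives: if $N\ge C2^k$ then $P_k(h)$ holds with probability $\ge 1-O(1/C)$, and if $N\le 2^k/C$ then $P_k(h)$ fails with probability $\ge 1-1/C$. The $\mathsf{FBPP}$ machine samples several independent $h$'s for each $k$, queries the $\NP^f$ oracle on $P_k(h)$, takes majority votes, and outputs the largest $k$ for which the vote is "yes"; call it $k^\star$. With high probability $2^{k^\star}$ is within a universal constant factor of $N$.

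\textbf{Step 2 (amplification by tensoring).} A constant-factor estimate is not yet a $(1+1/\poly)$-factor estimate, so instead of $f$ we feed Step 1 the function $g\colon(\{0,1\}^n)^t\to\{0,1\}$, $g(x_1,\dots,x_t)\defeq\bigwedge_{i=1}^t f(x_i)$, for a polynomial $t=\poly(n)$ to be fixed. Then $g^{-1}(1)=S^t$ has size $N^t$, and any query to $g$ is simulated by $t$ queries to $f$, so $\NP^g\subseteq\NP^f$ and Step 1 produces (with high probability) a value $M$ with $c^{-1}N^t\le M\le cN^t$ for a universal constant $c$. The machine outputs $\widetilde N\defeq M^{1/t}$, so that $c^{-1/t}N\le\widetilde N\le c^{1/t}N$ with $c^{1/t}=1+O((\log c)/t)=1+1/\poly(n)$ once $t$ is a sufficiently large polynomial. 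Finally $\widetilde p\defeq\widetilde N/2^n$ approximates $p$ to within the same multiplicative factor, and the whole procedure is randomized polynomial time making queries only to the $\NP^f$ oracle, i.e. it is an $\mathsf{FBPP}^{\NP^f}$ machine.

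\textbf{Step 3 (error probability).} Boost the overall success probability from a constant to $1-2^{-\poly(n)}$ by running the procedure $\poly(n)$ times and reporting the median estimate; a union bound over the $O(n)$ values of $k$ and the polynomially many hash samples keeps the total failure probability negligible.

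The main obstacle, and the reason the tensoring in Step 2 is essential, is that a single layer of pairwise independent hashing only pins down $N$ up to a multiplicative constant: in the regime $N\approx 2^k$ the indicator $[\,|\{x\in S:h(x)=0^k\}|\ge 1\,]$ has constant-order fluctuations, and no choice of hash range removes this ambiguity without an alphabet blow-up. Raising $S$ to the $t$-th power converts that constant multiplicative slack into a $(1+O(1/t))$ slack essentially for free, which is the crux of the argument; the remaining ingredients (existence of small pairwise independent families, Chebyshev and Chernoff/median amplification, and the closure $\NP^g\subseteq\NP^f$) are routine.
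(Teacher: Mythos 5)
The paper does not prove this statement at all: it is quoted as a known black-box result with a citation to Stockmeyer (1985), so there is no in-paper proof to compare against. Your argument is the standard and correct proof of that classical theorem---a Sipser/Stockmeyer pairwise-independent hashing scheme with an $\NP^f$ oracle to pin down $|f^{-1}(1)|$ to within a constant factor, followed by the tensor-power trick $g=\bigwedge_{i=1}^t f(x_i)$ to sharpen the constant factor to $1+1/\poly(n)$, with median/majority amplification and a union bound over the $O(nt)$ hash lengths; apart from routine housekeeping (the $N=0$ case, where the machine can simply output $0$, and fixed-precision output of $M^{1/t}$), it is complete and sound.
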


In conclusion, Theorem~\ref{thm:stockmeyer} along with the anti-concentration property of the output probabilities~\cite{harrow2018approximate}, and Conjecture~\ref{conj:sampling_is_in_ph} prove the following major open problem:
\begin{conj}\label{cor:QSupremacy}
Classically sampling from any distribution with a total variation distance of $1/\poly(n)$ from the output distribution of the random quantum circuit is hard unless the polynomial hierarchy collapses to finite level.
\end{conj}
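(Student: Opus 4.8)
The plan is to run the classical sampling-to-probability-estimation reduction (in the style of Aaronson--Arkhipov, adapted to random circuits) with Theorem~\ref{thm:stockmeyer} as the engine and Conjecture~\ref{conj:sampling_is_in_ph} as the final input. I would argue by contraposition. Suppose there were a randomized polynomial-time classical algorithm $\mathcal{S}$ that, given the description of a circuit $C$ drawn from the $n$-qubit random-circuit ensemble (together with an inverse-polynomial error tolerance), outputs a sample from some distribution $q_C$ with $\|q_C-p_C\|_{\mathrm{TV}}\le\varepsilon$, where $p_C(x):=|\braket{x|C|0}|^2$ and $\varepsilon=1/\poly(n)$ is a parameter to be fixed last; this in particular gives $\EE_C\|q_C-p_C\|_{\mathrm{TV}}\le\varepsilon$. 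The target is an $\mathsf{FBPP}^{\mathsf{NP}}$ procedure estimating $p_C(0^n)$ to within additive error $2^{-n}/\poly(n)$ for a $\frac34+\frac1{\poly(n)}$ fraction of $C$; Conjecture~\ref{conj:sampling_is_in_ph} then collapses the polynomial hierarchy, and since $\mathsf{FBPP}^{\mathsf{NP}}\subseteq\mathsf{PH}$ this proves the statement.

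\emph{Step 1: symmetrization (hiding).} First I would strip $0^n$ of its special role. Using that the random-circuit ensemble is invariant under $C\mapsto X^z C$ for every fixed string $z$, where $X^z:=\bigotimes_i X^{z_i}$ (the Pauli-$X$ layer is absorbed into the Haar-random gates of the final layers), replace $\mathcal{S}$ by $\mathcal{S}'$: on input $C$, draw $z$ uniformly, run $\mathcal{S}(X^z C)$ to get $y$, and output $y\oplus z$. Writing $q'_C$ for the output distribution of $\mathcal{S}'(C)$ and using the identity $p_{X^z C}(x)=p_C(x\oplus z)$, short computations show that $\EE_C\|q'_C-p_C\|_{\mathrm{TV}}\le\varepsilon$ still holds and that $\mathcal{S}'$ is covariant, $q'_{X^w C}(0^n)=q'_C(w)$. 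Averaging over the hidden translation then yields
\[
\EE_C\bigl|q'_C(0^n)-p_C(0^n)\bigr| = \EE_C\,\frac{1}{2^n}\sum_{w\in\{0,1\}^n}\bigl|q'_C(w)-p_C(w)\bigr| = \EE_C\,\frac{2\,\|q'_C-p_C\|_{\mathrm{TV}}}{2^n} \le \frac{2\varepsilon}{2^n},
\]
so by Markov's inequality $|q'_C(0^n)-p_C(0^n)|\le 2\varepsilon/(\delta\,2^n)$ for all but a $\delta$-fraction of $C$; I would take $\delta$ to be a small constant, say $1/8$.

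\emph{Steps 2 and 3: Stockmeyer and error bookkeeping.} Since $q'_C(0^n)=\Pr_r[\mathcal{S}'(C;r)=0^n]$ is the acceptance probability of a polynomial-size Boolean predicate in the internal random string $r$, Theorem~\ref{thm:stockmeyer} provides an $\mathsf{FBPP}^{\mathsf{NP}}$ algorithm outputting $\hat p$ with $|\hat p-q'_C(0^n)|\le q'_C(0^n)/\poly(n)$, with the relative precision ours to pick. Because $\EE_C[p_C(0^n)]=2^{-n}$ (by the ensemble symmetry the $2^n$ numbers $\EE_C[p_C(x)]$ are equal and sum to $1$) and, by the anti-concentration property~\cite{harrow2018approximate}, $p_C(0^n)$ sits at scale $2^{-n}$ with constant probability and has a light upper tail, Markov's inequality applied to $q'_C(0^n)$ lets me assume $q'_C(0^n)\le 2^{-n}\poly(n)$ outside a further $1/\poly(n)$-fraction of $C$; there Stockmeyer's absolute error is $\le 2^{-n}/\poly(n)$ with the quotient polynomial as large as I want. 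For $C$ avoiding both exceptional sets --- a $\frac34+\frac1{\poly(n)}$ fraction once $\delta=1/8$ --- the triangle inequality gives $|\hat p-p_C(0^n)|\le 2^{-n}/\poly(n)+2\varepsilon/(\delta\,2^n)$, which I drive below $2^{-n}/\poly'(n)$ for any target $\poly'$ by first fixing Stockmeyer's precision and then taking $\varepsilon$ a sufficiently small inverse polynomial. Feeding this into Conjecture~\ref{conj:sampling_is_in_ph} completes the contrapositive.

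\emph{Main obstacle.} With the three ingredients in hand the reduction is mostly bookkeeping; the delicate points are (i) the hiding step --- verifying that $C\mapsto X^z C$ genuinely preserves the random-circuit ensemble for the architecture under consideration, so that $0^n$ is not an anomalously easy output --- and (ii) composing a multiplicative Stockmeyer estimate with an average-case total-variation bound while still landing below $2^{-n}/\poly(n)$ uniformly over a $\frac34+\frac1{\poly(n)}$ fraction of circuits. The real difficulty, however, is packed into Conjecture~\ref{conj:sampling_is_in_ph}: improving the robustness of Theorems~\ref{thm:strong} and~\ref{thm:weak} from $2^{-\Omega(m\log m)}$ to $2^{-n}/\poly(n)$ is exactly what remains open.
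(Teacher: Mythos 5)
Your conditional derivation---hide $0^n$ via the $X^z$ symmetry of the Haar-random ensemble, apply Stockmeyer (Theorem~\ref{thm:stockmeyer}) to the symmetrized sampler's acceptance probability, and feed the resulting $2^{-n}/\poly(n)$-additive estimate of $|\braket{0|C|0}|^2$ on a $\frac34+\frac1{\poly(n)}$ fraction of circuits into Conjecture~\ref{conj:sampling_is_in_ph}---is exactly the route the paper sketches for this statement, which it deliberately leaves as a conjecture precisely because Conjecture~\ref{conj:sampling_is_in_ph} is open. Your bookkeeping is sound (Markov's inequality on $q'_C(0^n)$, whose mean is $O(2^{-n})$, suffices for the multiplicative-to-additive conversion, so anti-concentration is only cosmetic here), and you correctly identify that the sole genuine gap is Conjecture~\ref{conj:sampling_is_in_ph} itself, i.e., improving the robustness of Theorems~\ref{thm:strong} and~\ref{thm:weak} from $2^{-\Omega(m\log m)}$ to $2^{-n}/\poly(n)$.
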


\section{Cayley path}

We now define the interpolation between any two gates of the quantum computation (i.e., any two unitaries) based on the Cayley path, which was first introduced in \cite{movassagh2020quantum}.
Suppose $U_{0},U_{1}\in\mathbb{U}(N)$ are unitary matrices and we wish to interpolate between $U_0$ and $U_1$ via a path with nice algebraic properties that can be utilized in our reductions below. Let $\theta\in\mathbb{R}$
and $f(\theta)$ be the \textit{Cayley function}
\begin{equation}
f(\theta)=\frac{1+i\theta}{1-i\theta}\, ,\label{eq:f_x}
\end{equation}
where one defines $f(-\infty)=-1$. The Cayley function as just defined is a bijection between $\mathbb{R}\cup\{-\infty\}$ and the unit circle in the complex plane. 

The proposed path is
\begin{equation}
U(\theta)=U_{0}\:f(\theta h)=\sum_{\alpha=1}^{N}f(\theta h_{\alpha})\;U_{0}|\psi_{\alpha}\rangle\langle\psi_{\alpha}|\, ,\label{eq:C_THETA}
\end{equation}
where $h$ is a hermitian matrix defined by $h=f^{-1}\left(U_0^\dagger U_1\right)$, which is guaranteed to exist by the bijection property.
$U(\theta)$ is a unitary matrix as it is a product of two unitary
matrices. Note that $U(0)=U_{0}f(0)=U_{0}$ and $U(1)=U_{0}U_{0}^{\dagger}U_{1}=U_{1}$
as desired. We now derive the algebraic dependence of the entries
of $U(\theta)$ on $\theta$.

Let the spectral decomposition of $h$ be $h=\sum_{\alpha=1}^N h_\alpha |{\psi_\alpha}\rangle\langle\psi_\alpha|$.
Using the definition of the Cayley function and the foregoing equations
we write
\begin{equation}
U(\theta)=\frac{1}{q(\theta)}\sum_{\alpha=1}^{N}p_{\alpha}(\theta)\text{ }U_{0}\text{ }|\psi_{\alpha}\rangle\langle\psi_{\alpha}|\:,\label{eq:Cayley_path}
\end{equation}
where $q(\theta)$ and $p_{\alpha}(\theta)$ are univariate scalar complex polynomials
of degree $N$ in $\theta$:
\begin{align}
q(\theta)&=\prod_{\alpha=1}^{N}(1-i\theta h_{\alpha})\nonumber\\ p_{\alpha}(\theta)&=(1+i\theta h_{\alpha})\prod_{\beta\in[N]\backslash\alpha}(1-i\theta h_{\beta}).\label{eq:q_x}
\end{align}
In what follows we wish to apply the Cayley path to each one of the gates in the quantum circuit. For this we first need to formally define the architecture of the circuit.
\begin{dfn}[Architecture]\label{def:architecture}
The architecture $\mathcal{A}$ is a poly-time uniform family $\{A_\ell\}_{\ell=1,2,\dotsc}$ of quantum circuit where all quantum gates are ``blank'' and not specified. The quantum circuit $A_\ell$ has $n(\ell)$ qubits and $m(\ell)$ quantum gates.
If all quantum gates of $A_\ell$ are specified, then the quantum circuit is specified. We say an architecture is {\it local} if each gate acts on at most two qubits.
\end{dfn}
In this paper all architectures are assumed to be local.
\begin{dfn}
Let $\mathcal{H_{A}}$ be the distribution over circuits with architecture
$\mathcal{A}$ with an implicitly fixed $\ell$ (see Definition~\ref{def:architecture}), whose local gates are unitary matrices drawn independently and at random
from the Haar measure. 
\end{dfn}
Given a fixed architecture and a quantum circuit $C$ whose $k$-th gate is $C_k$, we consider the randomized quantum circuit by replacing all quantum gates of $C$ with quantum gates drawn from the Haar measure.  Then, we consider a Cayley path interpolation between each gate of the fixed circuit and the randomized one.

Let $C(\theta)$ denote the quantum circuit with $m$ gates:
\begin{equation}\label{eq:C}
  C(\theta)=\mathcal{C}_m(\theta)\cdots \mathcal{C}_2(\theta)\,\mathcal{C}_1(\theta)\, ,  
\end{equation}
where $\mathcal{C}_k(\theta)=\mathbb{I}\otimes C_k(\theta)$ is a unitary matrix that only acts non-trivially on the qubits that $C_k(\theta)$ acts on. Here each local unitary gate is a unitary-valued Cayley path $C_k(\theta)$:
\begin{equation}
C_{k}(\theta)=C_{k}\:f(\theta h_{k})\;,\label{eq:CayleyPathSM}
\end{equation}
where $f(\theta h_{k})$ is a unitary matrix and $h_k$ is hermitian  $h_{k}^{\dagger}=h_{k}$. Suppose $C_k$ is a fixed gate of a quantum computation and $H_k\equiv f(h_k)$ is a Haar unitary matrix then $C_{k}(0)=C_{k}$. Moreover, by the translation invariance of the Haar measure $C_{k}(1)=C_{k}H_k$ is a Haar random gate. Hence we have an interpolation scheme between any fixed gate and a Haar random gate. 
\begin{dfn}\label{def:H_A}
Let us denote by $\mathcal{H}_{\mathcal{A},\Delta}$
the distribution over circuits whose local gates are drawn from the distribution
induced by the Cayley path for $\theta=1-\Delta$ for $\Delta\in[0,1]$.
\end{dfn}
The randomness of the quantum circuit under the Cayley path is quantified in this lemma:
\begin{lem}[Total Variation Distance~\cite{movassagh2020quantum}]\label{lem:tvd}
For a circuit with $m$ gates and an architecture $\mathcal{A}$, the total variation distance between $\mathcal{H}_{\mathcal{A}}$ and $\mathcal{H}_{\mathcal{A},\Delta}$  is $O(m\Delta)$. 
\end{lem}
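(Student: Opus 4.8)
The plan is to reduce the claim for a circuit with $m$ gates to the single-gate case via a hybrid (telescoping) argument, and then bound the single-gate total variation distance by a direct estimate on the Cayley path near $\theta=1$. First I would recall that $\mathcal{H}_{\mathcal{A}}$ draws each of the $m$ local gates independently and Haar-randomly, while $\mathcal{H}_{\mathcal{A},\Delta}$ draws each gate from the push-forward of the Haar measure under the map $H_k \mapsto C_k\, f\big((1-\Delta) h_k\big)$ where $h_k = f^{-1}(C_k^\dagger \cdot C_k H_k) = f^{-1}(H_k)$. Since both distributions are products of $m$ independent single-gate distributions, the standard subadditivity of total variation distance under products gives
\begin{equation}
\|\mathcal{H}_{\mathcal{A}} - \mathcal{H}_{\mathcal{A},\Delta}\|_{\mathrm{TV}} \le \sum_{k=1}^m \|\mu_k^{(0)} - \mu_k^{(\Delta)}\|_{\mathrm{TV}},
\end{equation}
where $\mu_k^{(0)}$ is the Haar measure on $\mathbb{U}(d)$ ($d\le 4$) and $\mu_k^{(\Delta)}$ is its image under $g_\Delta : V \mapsto V\, f\big((1-\Delta)f^{-1}(V)\big)$ (absorbing the fixed unitary $C_k$, which is a measure-preserving translation and drops out of the TV distance). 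So it suffices to show $\|\mu^{(0)} - \mu^{(\Delta)}\|_{\mathrm{TV}} = O(\Delta)$ uniformly in the (bounded) dimension.

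For the single-gate bound, the natural route is to observe that $g_\Delta$ is a smooth, measure-near-preserving diffeomorphism of $\mathbb{U}(d)$ that equals the identity at $\Delta = 0$. Writing $V = f(h) = \sum_\alpha \frac{1+i h_\alpha}{1-i h_\alpha}|\psi_\alpha\rangle\langle\psi_\alpha|$ in its spectral decomposition, the map sends the eigenphase $\phi_\alpha = 2\arctan h_\alpha$ to $2\arctan\big((1-\Delta)h_\alpha\big)$, i.e.\ it acts only on the eigenvalues (fixing the eigenvectors) by a $\Delta$-dependent reparametrization of the circle that is $C^1$-close to the identity with $\partial_\Delta$ bounded by $O(|h_\alpha|/(1+(1-\Delta)^2 h_\alpha^2))\le O(1)$ in the relevant regime. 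Then $\|\mu^{(0)}-\mu^{(\Delta)}\|_{\mathrm{TV}} = \frac12\int |1 - J_\Delta(V)|\, d\mu^{(0)}(V)$ where $J_\Delta$ is the Radon–Nikodym derivative (Jacobian of $g_\Delta^{-1}$ against Haar); since $J_0 \equiv 1$ and $\partial_\Delta J_\Delta$ is bounded uniformly on $\mathbb{U}(d)$ for $d$ bounded, $|1 - J_\Delta| = O(\Delta)$ pointwise and hence in $L^1$, giving the claimed $O(\Delta)$ per gate and $O(m\Delta)$ in total.

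The main obstacle is the behavior near $h_\alpha = \pm\infty$, i.e.\ near the eigenvalue $-1$ of $V$, where $f^{-1}$ blows up: although the Haar measure assigns that locus measure zero, one must check that the Jacobian estimate is genuinely uniform and integrable there rather than merely pointwise, so that the $O(\Delta)$ constant does not secretly depend on how close an eigenvalue is to $-1$. I would handle this by changing variables to the eigenphases $\phi_\alpha \in (-\pi,\pi]$ (on which the Weyl integration formula gives an explicit, bounded density for fixed $d$), where $g_\Delta$ becomes $\phi_\alpha \mapsto 2\arctan\big((1-\Delta)\tan(\phi_\alpha/2)\big)$ — a smooth orientation-preserving self-map of the circle with derivative bounded above and below away from $0$ and $\infty$ uniformly in $\Delta \in [0,1]$ — so the induced density ratio is uniformly $1 + O(\Delta)$ and the Vandermonde factor in the Weyl formula transforms with a bounded, $O(\Delta)$-perturbed correction. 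Since this is the lemma cited verbatim from \cite{movassagh2020quantum}, I would also simply invoke that reference for the detailed constants, using the hybrid argument above to make the $m$-dependence transparent.
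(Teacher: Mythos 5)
The paper itself gives no proof of Lemma~\ref{lem:tvd}: it is imported verbatim from \cite{movassagh2020quantum}, so there is no in-text argument to compare against. Judged on its own merits, your outline is sound and follows the natural route (and, in spirit, the route of the cited reference): since both $\mathcal{H}_{\mathcal{A}}$ and $\mathcal{H}_{\mathcal{A},\Delta}$ are products of $m$ independent single-gate measures, subadditivity of total variation under products reduces the claim to a single-gate $O(\Delta)$ bound, and because the Cayley deformation acts equivariantly (it fixes the eigenvectors and reparametrizes each eigenphase by $\phi\mapsto 2\arctan\bigl((1-\Delta)\tan(\phi/2)\bigr)$), the single-gate distance reduces, via the Weyl integration formula, to comparing two explicit densities on at most four eigenphases. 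Two points deserve attention. First, a slip: your displayed map $g_\Delta : V \mapsto V\,f\bigl((1-\Delta)f^{-1}(V)\bigr)$ has a spurious left factor of $V$; the correct map (which your eigenphase analysis in fact uses) is $V \mapsto f\bigl((1-\Delta)f^{-1}(V)\bigr)$, with the fixed gate $C_k$ removed by translation invariance. Second, the heart of the matter is the uniform pointwise estimate that the density ratio is $1+O(\Delta)$, which requires controlling not only the Jacobian $\prod_\alpha T_\Delta'(\phi_\alpha)$ but also the ratio of Vandermonde factors $\prod_{\alpha<\beta}\bigl|e^{i\phi_\alpha}-e^{i\phi_\beta}\bigr|^2$ when two eigenphases are nearly coincident or the pair straddles the point $-1$; you assert this but do not carry it out. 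It does hold—since $T_\Delta$ is a circle diffeomorphism fixing $0$ and $\pm\pi$ with derivative in $[1-\Delta,(1-\Delta)^{-1}]$, a mean-value argument gives $\bigl|e^{iT_\Delta(\phi_\alpha)}-e^{iT_\Delta(\phi_\beta)}\bigr| = (1+O(\Delta))\bigl|e^{i\phi_\alpha}-e^{i\phi_\beta}\bigr|$ uniformly, and there are only $O(1)$ pairs for $d\le 4$—so this is a fillable gap rather than an error, but it is the step that actually proves the lemma and should be written out (also note the bound is only needed for, say, $\Delta\le 1/2$, since $O(m\Delta)$ is vacuous otherwise; your "uniformly in $\Delta\in[0,1]$" is slightly too strong as stated). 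Your observation that bounding $\partial_\Delta$ of the eigenphase map alone would not bound total variation is correct, and it is the density-ratio computation, not the $C^1$-closeness, that does the work.
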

We make the dependence on $k$ explicit in Eqs.~\eqref{eq:Cayley_path} and \eqref{eq:q_x} by denoting $p_{\alpha}(\theta)\mapsto p_{k,\alpha}(\theta)$ and $q(\theta)\mapsto q_{k}(\theta)$.
We can now express Eq.~\eqref{eq:CayleyPathSM} as
\begin{equation}
C_{k}(\theta)=\frac1{q_{k}(\theta)}\sum_{\alpha=1}^{N}p_{k,\alpha}(\theta)\;C_{k}|\psi_{k,\alpha}\rangle\langle\psi_{k,\alpha}|\label{eq:C_theta}
\end{equation}
where $q_{k}(\theta)  =  \prod_{\alpha=1}^{N}(1-i\theta h_{k,\alpha})$
\begin{equation}
p_{k,\alpha}(\theta)  =  \sum_{\alpha=1}^{N}(1+i\theta h_{k,\alpha})\prod_{\beta\in[N]\backslash\alpha}(1-i\theta h_{k,\beta}).\label{eq:p_theta}
\end{equation}
It will be useful to make a change of variables to $\theta=1-x$ such
that generic instances correspond to $x=0$ and $\mathsf{\#P}$-hard point
to $x=+1$. 

The probability amplitude of starting the quantum computation in the state $|0^n\rangle$ and measuring the string $|0^n\rangle$ is $p_0(x)\equiv|\langle0^{n}|C(x)|0^{n}\rangle|^{2}$. Note that at $x=1$ we recover the worst case $\mathsf{\#P}$-hard instance probability amplitude and $x=0$
corresponds to the probability amplitude of the generic random circuit. Using $\theta=1-x$ in Eqs.~\eqref{eq:C_theta}-\eqref{eq:p_theta}, the circuit has
the algebraic form~\cite{movassagh2020quantum}
\begin{align}
|\langle0^{n}|C(x)|0^{n}\rangle|^{2} &= \left|\langle0^{n}|\prod_{k=1}^{m}\mathcal{C}_{k}(x)|0^{n}\rangle\right|^{2}\nonumber\\
&\equiv\frac{|\langle0^{n}|P(x)|0^{n}\rangle|^{2}}{|Q(x)|^{2}},\label{eq:P_over_Q-1-1}
\end{align}
where
\begin{align}
P(x) \equiv \sum_{\alpha_{1},\dots,\alpha_{m}=1}^{N}\prod_{k=1}^{m}g_{k,\alpha_{k}}(x)\:\mathcal{C}_k\left(x=1\right)|\psi_{k,\alpha_{k}}\rangle\langle\psi_{k,\alpha_{k}}|\label{eq:rationality_C-1-1}
\end{align}
and
\begin{align}
|Q(x)|^{2} & \equiv \prod_{k=1}^{m}\prod_{\alpha_{k}=1}^{N}\left|1+ix\frac{h_{k,\alpha_{k}}}{r_{k,\alpha_{k}}}e^{iu_{k,\alpha_{k}}}\right|^{2}\label{eq:Q_Theta-1-1}\\
g_{k,\alpha_{k}}(x) & \equiv \left[e^{iu_{k,\alpha_{k}}}-ix\frac{h_{k,\alpha_{k}}}{r_{k,\alpha_{k}}}\right]\nonumber\\
&\times\prod_{\beta_{k}\in[N]\backslash\alpha_{k}}\left[e^{-iu_{k,\beta_{k}}}+ix\frac{h_{k,\beta_{k}}}{r_{k,\beta_{k}}}\right].\nonumber 
\end{align}
Here, we let $1\pm ih_{k,\alpha_k}=r_{k,\alpha_k}e^{\pm iu_{k,\alpha_k}}$ with $r_{k,\alpha_k}$ and $u_{k,\alpha_k}$ defined as $r_{k,\alpha_k}=\sqrt{1+h^2_{k,\alpha_k}}$ and $u_{k,\alpha_k}=\arctan(h_{k,\alpha_k})$.

The quantity $|Q(x)|^{2}$ can be pre-computed
in time $\Theta(m)$ as it only depends on the eigenvalues of the local terms which are matrices of size at most $N=4$. Since $\frac{h_{k,\alpha_{k}}}{r_{k,\alpha_{k}}}<1$
and for generic circuits $|x|\le\Delta=O(m^{-1})$, it is easily seen that $|Q(x)|^2$ is very near one:
\begin{align}
|Q(x)|^{2}&\le\prod_{k=1}^{m}\prod_{\alpha_{k}=1}^{N}\left|1+ix\frac{h_{k,\alpha_{k}}}{r_{k,\alpha_{k}}}e^{iu_{k,\alpha_{k}}}\right|^{2}\nonumber\\
&\le1+O(m\Delta)\;.\label{eq:Qx_bound}
\end{align}

\section{Proof of average-case robustness}
Our goal here is to prove that \textit{approximating} $p_{0}(x)\equiv|\langle0^{n}|C(x)|0^{n}\rangle|^{2}$ to within $\epsilon$ additive error is \textit{hard}
for as large an $\epsilon$ as possible. That is given an $x_{i}$ and a classical algorithm that promises to give us $p_{0}(x_{i})+\epsilon_{i}$
efficiently (polynomial classical time), where $|\epsilon_{i}|\le\epsilon\ll1$,
we wish to construct a low degree algebraic function $\tilde{p}(x)$ whose
extrapolation to $x=1$ is guaranteed to be \textit{hard}.

Since for any $x$, $|Q(x)|^{2}$ can be computed in time $\Theta(m)$ we can reduce the rational
functional form of $p_{0}(x)$ to a polynomial by multiplying through
by $|Q(x)|^{2}$, which for any given $x$ can be treated as simply
a constant. Let us denote by the ``exact'' polynomial
\begin{equation}\label{eq:Px}
   p_{e}(x)=|\langle0^{n}|P(x)|0^{n}\rangle|^{2}=|Q(x)|^{2}p_{0}(x)
\end{equation}
of degree $8m$ where we treated $Q(x)$ as a known constant. Therefore, we have at
our disposal a set of tuples $(x_{i},p_{e}(x_{i})+\epsilon_{i}|Q(x_{i})|^{2})$. In Eq.~\eqref{eq:Qx_bound} we showed that $|Q(x)|^{2}\le1+O(m\Delta)$, and
by taking $\Delta=O(m^{-1})$ we are guaranteed to have $|Q(x)|^{2}\approx1$.
This shows that the additive error $|Q(x_{i})|^{2}\epsilon_{i}\approx\epsilon_{i}$. 

Let the difference of the exact polynomial $p_{e}(x)$ from the one
that results from the extrapolation of the erroneous polynomial
$\tilde{p}(x)$ be defined by
\[
p(x)\equiv \tilde{p}(x)-p_{e}(x).
\]
We are promised that for all $|x_{i}|\le\Delta$, $|p(x_{i})|\le\epsilon$
and wish to show that $|p(1)|$ is sufficiently small such that it falls within a region whose hardness is guaranteed. We will return to the quantification of this region in Section~\ref{sec:CEQP}. For now let us bound the polynomial extrapolation error $|p(1)|$.  

The following Lemmas \ref{lem:PaturiLemm} and \ref{lem:3} and Corollary \ref{cor:PaturiCor} on this page are presented to show that one may obtain robustness beyond what the traditional Paturi's lemma allows. Nevertheless this approach will face other difficulties. Hence, starting in Lemma \ref{lem:Robustness} we prove the results  independent of this approach.

Traditionally this bound is obtained using Paturi's lemma \cite{paturi1992degree},
which we recall:
\begin{lem}\label{lem:PaturiLemm}[Paturi's lemma \cite{paturi1992degree}] Let $p(x)$ be a
polynomial of degree $d$, and suppose $|p(x)|\le\epsilon$ for $|x|\le\Delta$ where $\Delta\in(0,1)$.
Then $p(1)\le\epsilon\exp[2d(1+\Delta^{-1})]$
\end{lem}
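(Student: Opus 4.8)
The plan is to bound $|p(1)|$ for a polynomial $p(x)$ of degree $d$ satisfying $|p(x)|\le\epsilon$ on $|x|\le\Delta$ by a direct argument that does not invoke extremal-polynomial theory but instead uses a contour integral / scaling trick. First I would rescale: set $r(y)\defeq p(\Delta y)$, so that $r$ is a polynomial of degree $d$ with $|r(y)|\le\epsilon$ for $|y|\le 1$, and the goal becomes bounding $|r(1/\Delta)|$. Writing $r(y)=\sum_{j=0}^d c_j y^j$, the Cauchy estimates on the unit disk give $|c_j|\le\epsilon$ for every $j$, hence
\[
|r(1/\Delta)|\le\sum_{j=0}^d |c_j|\,\Delta^{-j}\le\epsilon\sum_{j=0}^{d}\Delta^{-j}\le \epsilon(d+1)\Delta^{-d}.
\]
Translating back, $|p(1)|\le\epsilon(d+1)\Delta^{-d}$, which is already a clean bound; comparing with Paturi's $\epsilon\exp[2d(1+\Delta^{-1})]$ one sees the new bound replaces the $\exp(2d/\Delta)$ factor by the far smaller $\Delta^{-d}$, at the cost of a harmless polynomial factor $d+1$. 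This is the source of the super-polynomial improvement in robustness advertised in the abstract: with $d=\Theta(m)$ and $\Delta=\Theta(1/m)$ we get $\Delta^{-d}=m^{\Theta(m)}=2^{\Theta(m\log m)}$ rather than $2^{\Theta(m^2)}$.

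The key steps, in order, are: (i) reduce to the unit-disk normalization by the substitution $x=\Delta y$; (ii) extract the coefficient bound $|c_j|\le\epsilon\Delta^{-j}$ in the original variable, either via the Cauchy integral formula $c_j=\frac{1}{2\pi i}\oint_{|x|=\Delta} \frac{p(x)}{x^{j+1}}\,dx$ applied to $p$ directly, or via Lagrange interpolation of $p$ at $d+1$ equally spaced nodes on $[-\Delta,\Delta]$ and bounding the resulting Lagrange basis polynomials at $x=1$ — the latter is the route the paper flags it will ultimately prefer, so I would phrase Corollary \ref{cor:PaturiCor} as the coefficient-sum bound and keep the Lagrange argument in reserve for Lemma \ref{lem:Robustness}; (iii) sum the geometric-like series $\sum_{j\le d}\Delta^{-j}$ and absorb it into $(d+1)\Delta^{-d}$ (or, more tightly, $\frac{\Delta^{-d-1}-1}{\Delta^{-1}-1}\le \frac{\Delta^{-d}}{1-\Delta}$ when one wants to avoid the extra $d+1$); (iv) record the consequence for the exact polynomial $p_e$ of degree $8m$, giving $|p(1)|\le \epsilon\cdot\mathrm{poly}(m)\cdot\Delta^{-8m}$, hence $\epsilon=2^{-\Omega(m\log m)}$ suffices to push the extrapolation error below the hardness threshold quantified in Section \ref{sec:CEQP}.

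The main obstacle I anticipate is not the coefficient bound itself — that is elementary — but rather that this clean statement alone is \emph{not} enough for the full argument, which is exactly why the paper says ``this approach will face other difficulties.'' The difficulty is twofold. First, we are not given the exact values $p_e(x_i)$ but only noisy samples $p_e(x_i)+\epsilon_i|Q(x_i)|^2$ at finitely many points, so the object we can actually construct is an interpolating polynomial $\tilde p$ through the \emph{noisy} data, and we must control $\tilde p - p_e$ everywhere on $|x|\le\Delta$, not just at the nodes; a uniform-in-$x$ bound (the role Rakhmanov's theorem plays in prior work) has to be obtained, and doing so without Rakhmanov forces the later Lagrange-interpolation analysis. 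Second, the degree here is $8m$ while we only have access to $O(m)$ noisy points clustered near $x=0$, so the conditioning of the interpolation (the Lebesgue constant at $x=1$) must be estimated carefully — a naive Berlekamp--Welch decoding is unstable under uniform additive noise, as the paper notes. So I would present the lemma and its corollary as stated, derive the above $\epsilon(d+1)\Delta^{-d}$ bound cleanly, and then explicitly signal that the genuinely robust version — handling noisy, clustered samples and a uniform error bound — is deferred to Lemma \ref{lem:Robustness}, where the Lagrange-polynomial error analysis is carried out from first principles.
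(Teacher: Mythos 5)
Your central step is invalid: the hypothesis of Paturi's lemma bounds $p$ only on the \emph{real segment} $[-\Delta,\Delta]$, not on the complex disk $\{x:|x|\le\Delta\}$. After your rescaling you only know $|r(y)|\le\epsilon$ for real $y\in[-1,1]$, so the Cauchy integral formula over the circle $|y|=1$ cannot be applied; it needs a bound on the whole circle, and for degree-$d$ polynomials the maximum on the circle can exceed the maximum on the segment by a factor exponential in $d$. Hence the coefficient bound $|c_j|\le\epsilon$ (equivalently $|c_j|\le\epsilon\Delta^{-j}$ for $p$) is false, and so is your conclusion $|p(1)|\le\epsilon(d+1)\Delta^{-d}$. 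A concrete counterexample is the Chebyshev polynomial $p(x)=\epsilon T_d(x/\Delta)$: it satisfies $|p(x)|\le\epsilon$ on $[-\Delta,\Delta]$, its leading coefficient is $\epsilon\,2^{d-1}\Delta^{-d}$, and $p(1)=\epsilon T_d(\Delta^{-1})\ge\frac{\epsilon}{2}\bigl(\Delta^{-1}+\sqrt{\Delta^{-2}-1}\bigr)^{d}$, which for small $\Delta$ is of order $\epsilon\,2^{d-1}\Delta^{-d}$ and exceeds $\epsilon(d+1)\Delta^{-d}$ as soon as $2^{d-1}>d+1$. Your fallback route does not rescue the claim either: Lagrange interpolation at $d+1$ equally spaced nodes bounds $|p(1)|$ by $\epsilon\sum_j|\delta_j(1)|$, which is of order $\epsilon(\mathrm{e}/\Delta)^{d}/\sqrt{2\pi d}$ (this is exactly Lemma~\ref{lem:Robustness}), not $\epsilon(d+1)\Delta^{-d}$, and it does not yield per-coefficient bounds of the form $\epsilon\Delta^{-j}$.

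For the statement as given, the paper simply quotes Paturi and never reproves Lemma~\ref{lem:PaturiLemm}; its own quantitative improvements come from the Chebyshev comparison (Corollary~\ref{cor:PaturiCor}, giving $|p(1)|\le\epsilon|T_d(\Delta^{-1})|<\epsilon(2\Delta^{-1})^{d}$ in Lemma~\ref{lem:3}) and from the Lagrange bound of Lemma~\ref{lem:Robustness}. Either correct route also implies the stated inequality, since for $\Delta\in(0,1)$ one has $\log(2\Delta^{-1})\le 1+\log\Delta^{-1}\le 1+\Delta^{-1}\le 2(1+\Delta^{-1})$, so $\epsilon\exp[d\log(2\Delta^{-1})]\le\epsilon\exp[2d(1+\Delta^{-1})]$; that is the repair I would make, replacing your disk/Cauchy argument by the interval-extremal (Chebyshev) or Lagrange argument. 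Note also that the true growth rate on an interval is $(2/\Delta)^d$ rather than your $\Delta^{-d}$; this costs only a $2^{d}$ (or $\mathrm{e}^{d}$) factor and therefore does not change the $2^{-\Omega(m\log m)}$ robustness when $\Delta=1/\mathrm{poly}(m)$, but the bound you wrote, as stated, is simply not true of polynomials bounded on a real interval.
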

For $k\ge0$, let us denote by $T_{k}(x)$ the $k^{\text{th}}$ Chebyshev
polynomial, which is a degree $k$ algebraic polynomial defined by
\[
T_{k}(x)=\frac{1}{2}\left[(x+\sqrt{x^{2}-1})^{k}+(x-\sqrt{x^{2}-1})^{k}\right]
\]
for $x > 1$.
Paturi has another result in the same paper (Corollary 2 in \cite{paturi1992degree}),
which says
\begin{cor}\label{cor:PaturiCor}
[Paturi's Corollary \cite{paturi1992degree}] Let $p(x)$ be a polynomial
of degree at most $d$. Assume $|p(x)|\le\epsilon$ in the interval
$[-\Delta,\Delta]$ for some $0<\Delta\le1$. We then have $|p(x)|\le\epsilon\left|T_{d}(1+\frac{|x|-\Delta}{\Delta})\right|$
for all $|x|\ge\Delta$ where $T_d$ denotes the Chebyshev polynomial of degree $d$.
\end{cor}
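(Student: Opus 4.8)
The plan is to express $p$ by Lagrange interpolation at the $d+1$ extrema of the Chebyshev polynomial $T_d$ and to show that the associated sum of absolute values of the Lagrange basis functions equals $|T_d|$ outside $[-1,1]$; this collapses the interpolation bound into exactly the claimed Chebyshev factor. First I would reduce to $\Delta=1$ by an affine rescaling: replacing $p(x)$ by the polynomial $x\mapsto p(\Delta x)/\epsilon$ turns the hypothesis into ``$|p|\le1$ on $[-1,1]$'' and the desired conclusion into ``$|p(y)|\le|T_d(y)|$ for every real $y$ with $|y|\ge1$'', since $1+\frac{|x|-\Delta}{\Delta}=|x|/\Delta$ and $|T_d(t)|=|T_d(-t)|$. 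The boundary case $|x|=\Delta$ (that is, $|y|=1$) is then immediate because $T_d(1)=1$, so only $|y|>1$ remains.

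Next, let $y_j=\cos(j\pi/d)$, $j=0,\dots,d$, so that $y_0=1>y_1>\dots>y_d=-1$ and $T_d(y_j)=(-1)^j$. Since $\deg p\le d$, Lagrange interpolation at these $d+1$ nodes is exact, $p(y)=\sum_{j=0}^d p(y_j)\,\ell_j(y)$ with $\ell_j(y)=\prod_{k\ne j}\frac{y-y_k}{y_j-y_k}$. For $y>1$ every factor $y-y_k$ is positive, whereas $\prod_{k\ne j}(y_j-y_k)$ contains exactly $j$ negative factors, so $\operatorname{sign}\bigl(\ell_j(y)\bigr)=(-1)^j$. Hence
\[
\sum_{j=0}^d|\ell_j(y)|=\sum_{j=0}^d(-1)^j\ell_j(y)=\sum_{j=0}^dT_d(y_j)\,\ell_j(y)=T_d(y),
\]
the last step because the right-hand side is the unique polynomial of degree at most $d$ that agrees with $T_d$ at the $d+1$ nodes. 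Since $T_d(y)>0$ for $y>1$ this is $\sum_j|\ell_j(y)|=|T_d(y)|$, and the case $y<-1$ follows from the node symmetry $-y_{d-j}=y_j$ together with $|T_d(-y)|=|T_d(y)|$. Therefore $|p(y)|\le\sum_j|p(y_j)|\,|\ell_j(y)|\le\sum_j|\ell_j(y)|=|T_d(y)|$ for all $|y|\ge1$; undoing the rescaling yields the corollary.

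The step I expect to be the main (though elementary) obstacle is the sign computation $\operatorname{sign}\bigl(\ell_j(y)\bigr)=(-1)^j$ for $y>1$, since everything downstream hinges on turning $\sum_j|\ell_j(y)|$ into the interpolant of the alternating data $T_d(y_j)=(-1)^j$ and hence into $T_d$ itself; the analogous bookkeeping for $y<-1$ requires a short symmetry argument. As a fallback I would use the classical Chebyshev comparison instead: if $|p(y^*)|>|T_d(y^*)|$ at some $y^*\ge1$, then $r=c\,T_d-p$ with $c=p(y^*)/T_d(y^*)$ satisfies $|c|>1$, alternates sign at the $d+1$ extrema (there $|c\,T_d|=|c|>1\ge|p|$) and vanishes at $y^*$, so the degree-$\le d$ polynomial $r$ would have at least $d+1$ roots, forcing $r\equiv0$ and the contradiction $\max_{[-1,1]}|p|=|c|>1$. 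Either route avoids Berlekamp--Welch and its noise instability, and the Lagrange version is in the same spirit as the interpolation estimates used from Lemma~\ref{lem:Robustness} onward.
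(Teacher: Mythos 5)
Your proof is correct, but there is nothing in the paper to compare it against: the paper does not prove this statement, it simply imports it as Corollary~2 of Paturi's paper and then uses it to derive Lemma~\ref{lem:3}. Your argument is a sound, self-contained proof of the underlying classical fact (a polynomial bounded by $1$ on $[-1,1]$ grows no faster than $|T_d|$ outside the interval). The rescaling to $\Delta=1$, $\epsilon=1$ is fine since $1+\frac{|x|-\Delta}{\Delta}=|x|/\Delta$ and $T_d$ has fixed parity; the sign computation $\operatorname{sign}(\ell_j(y))=(-1)^j$ for $y>1$ is correct (all numerator factors are positive and the denominator has exactly $j$ negative factors because the nodes $\cos(j\pi/d)$ are strictly decreasing); and the identity $\sum_j|\ell_j(y)|=\sum_j(-1)^j\ell_j(y)=T_d(y)$ holds because the right-hand side is the degree-$\le d$ interpolant of $T_d$ at $d+1$ nodes, with $T_d(y)\ge 1>0$ there. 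The $y<-1$ case via $p(-x)$ and the boundary case $|y|=1$ are handled correctly, and your fallback (the zero-counting comparison with $cT_d-p$) is the standard textbook proof, essentially the route Paturi's own citation rests on. It is worth noting that your Lagrange-at-Chebyshev-extrema computation is exactly the mechanism the paper gestures at in the Remark following Lemma~\ref{lem:Robustness} (choosing the extrema of $T_d(x/\Delta)$ as nodes makes the Chebyshev polynomial the worst case and recovers the bound of Lemma~\ref{lem:3}), so your proof is arguably more in the spirit of the paper's own interpolation toolkit (Lemmas~\ref{lem:Robustness} and~\ref{lem:Linterpolate}) than the cited black-box statement; what the citation buys the authors is brevity, while your argument buys self-containedness and makes explicit why the equally-spaced-node bound of Lemma~\ref{lem:Robustness} is only slightly weaker than the Chebyshev-optimal one.
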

We shall use the latter and prove the following lemma
\begin{lem}\label{lem:3}
Let $p(x)$ be a polynomial of degree $d$, and suppose $|p(x)|\le\epsilon$
for $|x|\le\Delta$ where $\Delta\in(0,1)$. Then
\[
|p(1)|<\epsilon\,\exp(d\log|2\Delta^{-1}|).
\]
\end{lem}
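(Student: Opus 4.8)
The plan is to plug the worst-case point $x=1$ straight into Paturi's Corollary~\ref{cor:PaturiCor} and then collapse the resulting Chebyshev value using its elementary closed form. Since $\Delta\in(0,1)$ we have $1\ge\Delta$, so Corollary~\ref{cor:PaturiCor} applies at $x=1$ and gives
\[
|p(1)|\le\epsilon\,\Bigl|T_d\Bigl(1+\tfrac{1-\Delta}{\Delta}\Bigr)\Bigr|=\epsilon\,\bigl|T_d(\Delta^{-1})\bigr|,
\]
where I used the identity $1+\frac{1-\Delta}{\Delta}=\Delta^{-1}$. Everything now reduces to bounding $T_d$ at the single point $\Delta^{-1}>1$.

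Next I would invoke the closed form $T_d(y)=\frac12\bigl[(y+\sqrt{y^2-1})^d+(y-\sqrt{y^2-1})^d\bigr]$, valid for $y>1$, specialised to $y=\Delta^{-1}$. Because $y>1$ we have $0<y-\sqrt{y^2-1}=\bigl(y+\sqrt{y^2-1}\bigr)^{-1}<1$ and $\sqrt{y^2-1}<y$, hence $y+\sqrt{y^2-1}<2y$; consequently both summands are strictly below $(2y)^d$, which yields $T_d(\Delta^{-1})<(2\Delta^{-1})^d=\exp\bigl(d\log(2\Delta^{-1})\bigr)$ (and $T_d(\Delta^{-1})>0$, so the absolute value is harmless). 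Combining this with the previous display gives $|p(1)|<\epsilon\exp\bigl(d\log(2\Delta^{-1})\bigr)$, which is the claim; for $d\ge1$ the inequality is strict because $\sqrt{y^2-1}<y$ is strict.

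There is no genuine obstacle: the estimate is a one-line manipulation of the Chebyshev closed form. The only point worth stressing is the comparison with a direct application of Paturi's Lemma~\ref{lem:PaturiLemm}: routing through the Chebyshev Corollary~\ref{cor:PaturiCor} replaces the exponent $2d(1+\Delta^{-1})$ by the much smaller $d\log(2\Delta^{-1})$, which is precisely the (super-polynomial) improvement the rest of the paper exploits. The only care needed is to confirm that $x=1$ lies in the validity range $|x|\ge\Delta$ of Corollary~\ref{cor:PaturiCor} (true since $\Delta<1$) and, if the strict inequality is to hold verbatim, to read the hypothesis as $d\ge1$, since for a constant polynomial one can only conclude $|p(1)|\le\epsilon$.
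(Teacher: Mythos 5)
Your proof is correct and follows essentially the same route as the paper's: apply Corollary~\ref{cor:PaturiCor} at $x=1$ to get $|p(1)|\le\epsilon|T_d(\Delta^{-1})|$, then bound the Chebyshev closed form by $(2\Delta^{-1})^d$. Your added caveat about the $d=0$ case (where only $|p(1)|\le\epsilon$ holds) is a fair minor observation that applies equally to the paper's own strict inequality.
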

\begin{proof}
From Paturi's corollary we have $|p(1)|\le\epsilon|T_{d}(\Delta^{-1})|$.
Moreover,
\begin{align*}
|T_{d}(x)| & \le \frac{1}{2}\left[\left|(x+\sqrt{x^{2}-1})^{d}\right|+\left|(x-\sqrt{x^{2}-1})^{d}\right|\right]\\
 & < \left(|x|+\sqrt{x^{2}-1}\right)^{d}<|2x|^{d}=e^{d\log(2|x|)}.
\end{align*}
We conclude that $|p(1)|<\epsilon\,\exp\left[d\log(2\Delta^{-1})\right]=\epsilon\,\exp\left[-\left(d\log\Delta\right)\left(1-\frac{1}{\log_{2}\Delta}\right)\right]$.
\end{proof}
An issue one faces in making robustness claims is that the discrete
bound of $|p(x_{i})|\le\epsilon$ for $|x_{i}|\le\epsilon$, does
not readily imply a uniform bound $|p(x)|\le\epsilon$ for all $|x|\le\Delta$.
This is traditionally remedied by Rakhmanov's result \cite{rakhmanov2007bounds} (see \cite{movassagh2020quantum,bouland2018quantum}).
Here we will do without Rakhmanov's result. To do so, take $d+1$
points in the interval $[-\Delta,\Delta]$ and estimate
the function $p(x)$ using the Lagrange interpolation technique. We
prove
\begin{lem}
\label{lem:Robustness}Let $p(x)$ be a polynomial of degree at most
$d$. Let $\Delta\in(0,1)$. Assume that $|p(x_{j})|\le\epsilon$ for all of the $d+1$ equally-spaced points
$x_{j}=-\Delta+\frac{2j}{d}\Delta$ for $j=0,1,\dots,d$. Then
\begin{equation}
|p(1)|<\epsilon\:\frac{\exp\left[d(1+\log\Delta^{-1})\right]}{\sqrt{2\pi d}}\;.\label{eq:Error_Final}
\end{equation}
\end{lem}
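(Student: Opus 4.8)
The plan is to use the exact Lagrange interpolation formula at the $d+1$ prescribed nodes. Since $p$ has degree at most $d$ it is completely determined by its values there, so $p(x)=\sum_{j=0}^{d}p(x_j)\,L_j(x)$ with $L_j(x)=\prod_{k\ne j}\frac{x-x_k}{x_j-x_k}$. Evaluating at $x=1$, applying the triangle inequality and the hypothesis $|p(x_j)|\le\epsilon$, we get $|p(1)|\le\epsilon\sum_{j=0}^{d}|L_j(1)|$, and everything reduces to bounding $\sum_{j}|L_j(1)|=\sum_{j}\prod_{k\ne j}\frac{|1-x_k|}{|x_j-x_k|}$.

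First I would compute the denominators exactly: because the nodes are equally spaced, $x_j-x_k=\frac{2\Delta}{d}(j-k)$, so $\prod_{k\ne j}|x_j-x_k|=\big(\frac{2\Delta}{d}\big)^{d}j!\,(d-j)!$. Next, for the numerators I would exploit that the node set is symmetric about the origin, $x_{d-k}=-x_k$: this gives $\prod_{k=0}^{d}(1-x_k)=\sqrt{\prod_{k}(1-x_k^2)}\le 1$ since each $|x_k|\le\Delta<1$. To control $\prod_{k\ne j}(1-x_k)=\prod_{k}(1-x_k)/(1-x_j)$, where dividing out $1-x_j$ can exceed $1$ when $x_j>0$, I would pair index $j$ with $d-j$: since $x_{d-j}=-x_j$, one has $\prod_{k\ne j}(1-x_k)+\prod_{k\ne d-j}(1-x_k)=\big(\prod_{k}(1-x_k)\big)\frac{2}{1-x_j^2}\le 2$, the last step because the factor $1-x_j^2$ occurs twice inside $\prod_{k}(1-x_k^2)$, so $\prod_{k}(1-x_k)\le 1-x_j^2$. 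Hence in $\sum_{j}\prod_{k\ne j}|1-x_k|\cdot\frac{1}{j!(d-j)!}$ each pair contributes at most $\frac{2}{j!(d-j)!}$ (and the self-paired middle term for even $d$ is $\le 1$ since $x_{d/2}=0$), so the whole sum is at most $\sum_{j}\frac{1}{j!(d-j)!}=\frac{2^{d}}{d!}$.

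Combining the two estimates, $\sum_{j}|L_j(1)|\le\big(\frac{d}{2\Delta}\big)^{d}\frac{2^{d}}{d!}=\frac{d^{d}}{\Delta^{d}\,d!}$, and the strict Stirling bound $d!>\sqrt{2\pi d}\,(d/e)^{d}$ turns this into $\sum_{j}|L_j(1)|<\frac{e^{d}}{\Delta^{d}\sqrt{2\pi d}}=\frac{\exp[d(1+\log\Delta^{-1})]}{\sqrt{2\pi d}}$; multiplying by $\epsilon$ finishes the proof.

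I expect the only delicate point to be the numerator estimate. The crude bound $|1-x_k|\le 1+\Delta$ costs a factor $(1+\Delta)^{d}$, which is harmless in the regime $\Delta=O(1/m)$ but would spoil the clean exponent $d(1+\log\Delta^{-1})$; the symmetry-plus-pairing argument is exactly what keeps the constant tight. The remaining ingredients — the closed form for the denominator product, the identity $\sum_{j}\binom{d}{j}=2^{d}$, and Stirling's inequality — are routine.
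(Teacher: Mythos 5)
Your proof is correct and follows essentially the same route as the paper: Lagrange interpolation at the equally-spaced nodes, the exact denominator product $\bigl(\tfrac{2\Delta}{d}\bigr)^{d}j!\,(d-j)!$, the symmetry $x_{d-j}=-x_j$ to bound the numerators pairwise and recover $\sum_j\binom{d}{j}(1+x_j)=2^d$ (the paper keeps the factor $1+x_j$ explicitly and sums it, while you bound each paired numerator sum by $2$ directly — the same idea), and Stirling's inequality to finish. No gaps; the organization of the pairing step is the only cosmetic difference.
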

\begin{proof}
Let $p_{j}=p(x_{j})$ for all $j=\{0,1,2,\dots,d\}$, where by assumption
$|p_{j}|\le\epsilon$. The Lagrange representation of the function $p(x)$
writes
\[
p(x)=\sum_{j=0}^{d}p_{j}\;\delta_{j}(x),\qquad\delta_{j}(x)\equiv\frac{\prod_{\ell\ne j}x-x_{\ell}}{\prod_{\ell\ne j}x_{j}-x_{\ell}}\;.
\]
By triangular inequality we have $|p(1)|\le\epsilon\sum_{j=0}^{d}|\delta_{j}(1)|$.
Moreover, using $x_{j}=-x_{d-j}$ and the fact that $|x_{j}|<1$ for
all $j$ we have
\begin{align*}
&|\delta_{j}(1)|=\frac{\prod_{\ell\ne j}|1-x_{\ell}|}{\prod_{\ell\ne j}|x_{j}-x_{\ell}|}\\
&=\frac{(1+x_{j})\prod_{\ell\in\{0,1,\dotsc,\lfloor(d-1)/2\rfloor\}\setminus\{j,\,d-j\}}(1-x_{\ell}^{2})}{\prod_{\ell\ne j}|x_{j}-x_{\ell}|}\\
&<\frac{(1+x_{j})}{\prod_{\ell\ne j}|x_{j}-x_{\ell}|}.
\end{align*}
Since $x_{j}-x_{\ell}=\frac{2\Delta}{d}(j-\ell)$, we have $\prod_{\ell\ne j}|x_{j}-x_{\ell}|=(\frac{2\Delta}{d})^{d}\prod_{\ell\ne j}|j-\ell|$.
Moreover $\prod_{\ell\ne j}|j-\ell|=\prod_{\ell\in\{0,1,2,\dots,,j-1,j+1,\dots,d\}}|j-\ell|=j!(d-j)!$
and we obtain
\[
|\delta_{j}(1)|=\left(\frac{d}{2\Delta}\right)^{d}\frac{(1+x_{j})}{\prod_{\ell\ne j}|j-\ell|}=\left(\frac{d}{2\Delta}\right)^{d}\frac{(1+x_{j})}{j!\:(d-j)!}\;.
\]
We express the bound  $|p(1)|\le\epsilon\sum_{j=0}^{d}|\delta_{j}(1)|$
as
\begin{align*}
|p(1)| & < \epsilon\left(\frac{d}{2\Delta}\right)^{d}\sum_{j=0}^{d}\frac{(1+x_{j})}{j!\:(d-j)!}\\
&=\epsilon\left(\frac{d}{2\Delta}\right)^{d}\frac{1}{d!}\sum_{j=0}^{d}\binom{d}{j}(1+x_{j}).
\end{align*}
Using the symmetry of $x_{j}=-x_{d-j}$ we have $(1+x_{j})+(1+x_{d-j})=2$
and it is easy to see that (irrespective of the parity of $d$)
\[
\sum_{j=0}^{d}\binom{d}{j}(1+x_{j})=2^{d}.
\]
By Stirling's inequality $n! \ge \sqrt{2\pi n}\frac{n^n}{\mathrm{e}^n}$, we conclude that
\begin{align*}
|p(1)|&<\epsilon\left(\frac{d}{2\Delta}\right)^{d}\frac{2^{d}}{d!}\le\epsilon\frac{(\mathrm{e}\Delta^{-1})^{d}}{\sqrt{2\pi d}}\\
&=\epsilon\frac{\exp\left[d(1+\log\Delta^{-1})\right]}{\sqrt{2\pi d}}\;.
\end{align*}
\end{proof}
\begin{rem}
Note that the choice of the equally-spaced $d+1$ points is not exactly optimal. We choose the $d+1$ points $\{\Delta\cos(0),\Delta\cos(\pi/d),\Delta\cos(2\pi/d),\dotsc,\Delta\cos((d-1)\pi/d),\Delta\cos(\pi)\}$, which are the extrema of the Chebyshev polynomials $T_d(x/\Delta)$. Then
the worst-case polynomial is the Chebyshev polynomial $\epsilon T_d(x/\Delta)$, and the same bound as in Lemma~\ref{lem:3} is obtained. The issue with Lemma~\ref{lem:3} remains to be that it either assumes a uniform bound or requires using Chebyshev extrema as just described. To prove Theorem~\ref{thm:weak} we need Lemma~\ref{lem:Linterpolate}. This lemma does not give the freedom to choose the points exactly. Therefore, we cannot use Chebyshev extrema at will that saturate the Paturi's Lemma (Lemma~\ref{lem:3})
\end{rem}

Similarly, we obtain the following lemma where $d+1$ points are chosen from $L$ equally-spaced points in $[-\Delta,\Delta]$.
\begin{lem}\label{lem:Linterpolate}
Let $p(x)$ be a polynomial of degree at most $d$, and $L$ an integer at least $d+1$.
Let $a_0,a_1,\dotsc,a_d$ be integers satisfying $0\le a_0 < a_1 < \dotsb < a_d\le L-1$.  Let $\Delta\in(0,1)$.
Assume that $|p(x_{j})|\le\epsilon$ for all of the $d+1$ points
$x_{j}=-\Delta+\frac{2a_j}{L-1}\Delta$ for $j=0,1,2,\dots,d$. Then
\begin{align*}
    |p(1)|
    \leq \epsilon \, \frac{\exp [d (1 + \log ((1+\Delta^{-1})\frac{L-1}{d}))]}{\sqrt{2\pi d}}\;.
\end{align*}
\end{lem}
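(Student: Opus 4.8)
The plan is to mimic the proof of Lemma~\ref{lem:Robustness} but to track carefully the spacing of the now-irregular nodes. Write $p(x)=\sum_{j=0}^d p_j\,\delta_j(x)$ in Lagrange form with $\delta_j(x)=\prod_{\ell\neq j}\frac{x-x_\ell}{x_j-x_\ell}$, apply the triangle inequality to get $|p(1)|\le\epsilon\sum_{j=0}^d|\delta_j(1)|$, and bound each $|\delta_j(1)|$ separately. For the numerator, every $x_\ell$ lies in $[-\Delta,\Delta]\subset(-1,1)$, so $|1-x_\ell|\le 1+\Delta$ and hence $\prod_{\ell\neq j}|1-x_\ell|\le(1+\Delta)^d$. (Unlike in Lemma~\ref{lem:Robustness} we can no longer pair up $x_\ell$ with $-x_\ell$, since the chosen subset of the $L$ grid points need not be symmetric, so I just use the crude bound $1+\Delta\le 1+1$ is too lossy — I keep it as $1+\Delta$, or equivalently factor the eventual $\Delta^{-d}$ to write $(1+\Delta)^d=\Delta^d(1+\Delta^{-1})^d$.)

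The denominator is the step that needs the new hypothesis. Since $x_j-x_\ell=\frac{2\Delta}{L-1}(a_j-a_\ell)$, we get $\prod_{\ell\neq j}|x_j-x_\ell|=\left(\frac{2\Delta}{L-1}\right)^d\prod_{\ell\neq j}|a_j-a_\ell|$. Now I would prove the combinatorial inequality $\prod_{\ell\neq j}|a_j-a_\ell|\ge \prod_{\ell\neq j}|j-\ell| = j!\,(d-j)!$, which holds because the $a_\ell$ are strictly increasing integers: for $\ell>j$ one has $a_\ell-a_j\ge \ell-j$, and for $\ell<j$ one has $a_j-a_\ell\ge j-\ell$. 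This gives
\[
|\delta_j(1)|\le (1+\Delta)^d\left(\frac{L-1}{2\Delta}\right)^d\frac{1}{j!\,(d-j)!}
= \left(\frac{(1+\Delta^{-1})(L-1)}{2}\right)^d\frac{1}{j!\,(d-j)!}.
\]

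Summing over $j$, I use $\sum_{j=0}^d\frac{1}{j!(d-j)!}=\frac{1}{d!}\sum_j\binom{d}{j}=\frac{2^d}{d!}$, so that
\[
|p(1)|\le \epsilon\left((1+\Delta^{-1})(L-1)\right)^d\frac{1}{d!}
\le \epsilon\,\frac{\bigl(\mathrm{e}(1+\Delta^{-1})(L-1)/d\bigr)^d}{\sqrt{2\pi d}},
\]
where the last inequality is Stirling's bound $d!\ge\sqrt{2\pi d}\,(d/\mathrm{e})^d$ exactly as used in Lemma~\ref{lem:Robustness}. Rewriting $\bigl(\mathrm{e}(1+\Delta^{-1})(L-1)/d\bigr)^d=\exp\!\bigl[d(1+\log((1+\Delta^{-1})\tfrac{L-1}{d}))\bigr]$ yields the claimed bound. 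The only genuinely new ingredient beyond Lemma~\ref{lem:Robustness} is the monotone-integer-spacing inequality $\prod_{\ell\neq j}|a_j-a_\ell|\ge j!(d-j)!$, which I expect to be the one point worth stating carefully; everything else is a direct transcription of the earlier argument with $\frac{2\Delta}{d}$ replaced by $\frac{2\Delta}{L-1}$ and the numerator bound $1+x_j\le 2$ replaced by $|1-x_\ell|\le 1+\Delta$ (absorbed into the $(1+\Delta^{-1})$ factor). One should also double-check that $L\ge d+1$ guarantees the $a_j$ can indeed be chosen distinct, which is exactly the stated hypothesis, so no extra case analysis is required.
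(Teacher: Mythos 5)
Your proposal is correct and follows essentially the same route as the paper's own proof in Appendix~\ref{apx:Linterpolate}: Lagrange form, triangle inequality, the numerator bound $|1-x_\ell|\le 1+\Delta$, the spacing bound $|x_j-x_\ell|=\tfrac{2\Delta}{L-1}|a_j-a_\ell|\ge\tfrac{2\Delta}{L-1}|j-\ell|$ (your explicitly stated monotone-integer inequality is exactly the paper's implicit step), and the binomial sum plus Stirling to finish. No gaps.
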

The proof of Lemma~\ref{lem:Linterpolate} is similar to the proof of Lemma~\ref{lem:Robustness}, and is presented in Appendix~\ref{apx:Linterpolate}.

Armed with the new extrapolation error bound, we proceed to prove
the hardness of evaluating the probability amplitudes of general circuits
with an additive error. 
In the following, we show the reductions from the worst-case to the average-case computation of the output probability of the quantum circuit. While real numbers appear in the reduction algorithms, they should be represented by $\mathrm{poly}(m)$ bits. The rounding only causes additional errors of size $2^{-\mathrm{poly}(m)}$, which will not affect our bounds and conclusions. As in~\cite{aaronson2011computational} and for simplicity, we ignore rounding issues in the proofs,
and work with real numbers.
\begin{lem}[Strong oracle] \label{lem:strong_oracle}
 Let $\delta>0$ be a constant and $\mathcal{O}$ a classical oracle that takes as input the classical description of the quantum circuit $C$ in the architecture $\mathcal{A}$ and outputs $\mathcal{O}(C)$ that satisfies
\[
\Pr_{C\sim\mathcal{H_{A}}}\left[\;|\,\mathcal{O}(C)-|\langle0^{n}|C|0^{n}\rangle|^{2}\,|\le\epsilon\;\right]\ge1-\frac{\delta}{8m+1}.
\]
Then there exists a classical probabilistic polynomial-time algorithm $\mathcal{R}$ with access to $\mathcal{O}$ that outputs $\mathcal{R}^{\mathcal{O}}(C)$ satisfying:
\begin{align*}
&\Pr_{\mathcal{R}}\left[\;|\,\mathcal{R}^{\mathcal{O}}(C)-|\langle0^{n}|C|0^{n}\rangle|^{2}\,|<\epsilon\exp\left[O(m\log m)\right]\;\right]\\
&\ge1-\delta-\frac1{\poly(m)}.
\end{align*}
\end{lem}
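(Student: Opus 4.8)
The plan is to realize the worst-case instance $|\langle 0^n|C|0^n\rangle|^2$ (which is the value $p_0(1)$ of the Cayley-path interpolation at $x=1$) by extrapolating from the oracle's answers at many points $x$ near $0$, where the circuit $C(x)$ is distributed close to Haar-random. The key objects are: the exact polynomial $p_e(x) = |Q(x)|^2 p_0(x)$ of degree $8m$ from Eq.~\eqref{eq:Px}, and the erroneous polynomial $\tilde p(x)$ that the reduction reconstructs from noisy oracle data. Once $\tilde p$ is obtained, $\tilde p(1)$ divided by the (classically computable) constant $|Q(1)|^2$ is the candidate output, and Lemma~\ref{lem:Robustness} controls $|\tilde p(1) - p_e(1)| = |p(1)|$.

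\textbf{Steps.} First, fix $\Delta = \Theta(1/m)$ and let $d = 8m$. Take the $d+1$ equally-spaced points $x_j = -\Delta + \tfrac{2j}{d}\Delta$ as in Lemma~\ref{lem:Robustness}. For each $j$, the circuit $C(x_j)$ has local gates distributed according to $\mathcal{H}_{\mathcal{A},\Delta_j}$ with $|\Delta_j| = |1-x_j| \le 1$, but more importantly each $x_j$ is within $O(\Delta)$ of the generic point, so by Lemma~\ref{lem:tvd} the total variation distance between the distribution of $C(x_j)$ and $\mathcal{H}_{\mathcal{A}}$ is $O(m\Delta) = O(1)$ — we must choose the constant in $\Delta$ small enough that this is, say, at most $\tfrac14 \cdot \tfrac{\delta}{8m+1}$ for \emph{each} $j$; then a union bound over the $d+1 = 8m+1$ points shows that with probability $\ge 1 - \delta$ (over the internal randomness of $\mathcal{R}$, which samples the Haar gates to build $C(x_j)$) the oracle $\mathcal{O}$ returns an $\epsilon$-accurate answer at \emph{all} $d+1$ points simultaneously. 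Second, from the $d+1$ tuples $(x_j, \mathcal{O}(C(x_j)))$, multiply each by the precomputed constant $|Q(x_j)|^2$ to get approximate values of $p_e(x_j)$ with additive error $|Q(x_j)|^2 \epsilon \le (1+O(m\Delta))\epsilon = O(\epsilon)$; interpolate the unique degree-$d$ polynomial $\tilde p$ through these $d+1$ points (exact polynomial interpolation, no Berlekamp--Welch needed since there are no outliers once the union bound succeeds). Third, output $\tilde p(1)/|Q(1)|^2$. Fourth, apply Lemma~\ref{lem:Robustness} with error parameter $O(\epsilon)$ and $d = 8m$: $|p(1)| < O(\epsilon)\,\exp[8m(1+\log\Delta^{-1})]/\sqrt{16\pi m}$; since $\Delta^{-1} = O(m)$, this is $\epsilon\exp[O(m\log m)]$, and dividing by $|Q(1)|^2 = \Theta(1)$ preserves the bound. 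The stated failure probability $1 - \delta - 1/\poly(m)$ absorbs the $\delta$ from the union bound plus a $1/\poly(m)$ slack for the $2^{-\poly(m)}$ rounding errors mentioned before the lemma.

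\textbf{Main obstacle.} The delicate point is the quantifier juggling in the union bound: the oracle's guarantee is an \emph{average-case} statement over $C \sim \mathcal{H}_{\mathcal{A}}$, but the reduction needs it to hold simultaneously at $d+1$ \emph{correlated} points $C(x_0),\dots,C(x_d)$ that all come from the \emph{same} underlying Haar sample of the $h_k$'s. The resolution — already implicit in the $\tfrac{\delta}{8m+1}$ in the hypothesis — is that we do not need independence: for \emph{each fixed} $j$, the marginal distribution of $C(x_j)$ is $O(m\Delta)$-close in total variation to $\mathcal{H}_{\mathcal{A}}$, so the oracle fails at point $j$ with probability at most $\tfrac{\delta}{8m+1} + O(m\Delta)$; a crude union bound over $j = 0,\dots,d$ then gives total failure probability $\le \delta + O(m^2\Delta)$, which is $\le \delta + O(1/\poly(m))$ if we shrink $\Delta$ to $o(1/m^2)$ times a small constant — consistent with the freedom to choose the constant in $\Delta = \Theta(1/m)$, or if needed $\Delta = \Theta(1/m^2)$, which only worsens the bound to $\epsilon\exp[O(m\log m)]$ still. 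The rest is routine: Lagrange interpolation is numerically benign here because all evaluation points lie in $[-\Delta,\Delta]$ and we extrapolate to a single point $x=1$, exactly the regime Lemma~\ref{lem:Robustness} was designed for.
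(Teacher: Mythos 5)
Your overall route is the paper's own: reduce to the degree-$8m$ polynomial $p_e(x)=|Q(x)|^2p_0(x)$, query the oracle at the $8m+1$ points of Lemma~\ref{lem:Robustness} inside $[-\Delta,\Delta]$, use Lemma~\ref{lem:tvd} plus a union bound to get all $8m+1$ values $\epsilon$-accurate except with probability $\delta+(8m+1)\,O(m\Delta)$, then Lagrange-extrapolate to $x=1$ and invoke Lemma~\ref{lem:Robustness}. Two quantitative steps need repair. First, your opening choice $\Delta=\Theta(1/m)$ cannot make the per-point total-variation surcharge $O(m\Delta)$ as small as $\delta/(8m+1)$, and even your corrected choices are not quite right: with $\Delta=\Theta(1/m^2)$ the union bound leaves an extra \emph{constant} failure probability, and ``$o(1/m^2)$'' only gives $o(1)$; to land inside the stated $1/\poly(m)$ slack you need $\Delta=\Theta(m^{-k})$ with $k$ strictly greater than $2$, which is exactly what the paper fixes. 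Since $\log\Delta^{-1}=O(\log m)$ for any polynomial choice, the final bound $\epsilon\exp[O(m\log m)]$ is unaffected, so this is a bookkeeping fix, not a structural one.

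Second, and more substantively, your assertion that $|Q(1)|^2=\Theta(1)$ is false, and it is what your final step (outputting $\tilde p(1)/|Q(1)|^2$) leans on. The estimate $|Q(x)|^2\le 1+O(m\Delta)$ in Eq.~\eqref{eq:Qx_bound} is only valid for $|x|\le\Delta$; at the extrapolation point one computes $\bigl|1+i\tfrac{h}{r}e^{iu}\bigr|^2=(1+h^2)^{-1}$, so $|Q(1)|^2=\prod_{k,\alpha}(1+h_{k,\alpha}^2)^{-1}$, which for Haar-random $h_k$ is typically $2^{-\Theta(m)}$, not $\Theta(1)$. Dividing by it therefore amplifies the extrapolation error by $\prod_{k,\alpha}(1+h_{k,\alpha}^2)$. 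The conclusion survives, but only via an argument you must supply: except with probability $1/\poly(m)$ over the Haar draws, every eigenvalue satisfies $|h_{k,\alpha}|\le\poly(m)$, so the amplification factor is $\exp[O(m\log m)]$ and is absorbed into the stated error, with the rare bad event absorbed into the $1/\poly(m)$ loss in success probability. (The paper's own write-up is terse at this point as well---it bounds the error in $p_e(1)$ and leaves the passage back to $p_0(1)$ implicit---but as written your justification of that passage is incorrect and should be replaced by the high-probability bound on $\|h_k\|_\infty$ just described.)
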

\begin{proof}
Previously we proved that the total variation distance between $\mathcal{H_{A}}$
and $\mathcal{H}_{\mathcal{A},\Delta}$ is $O(m\Delta)$ (see Lemma~\ref{lem:tvd}).
 Hence invoking the oracle $\mathcal{O}$ it holds that
\begin{align*}
&\Pr_{C\sim\mathcal{H}_{\mathcal{A},\Delta}}\left[\;|\,\mathcal{O}(C)-|\langle0^{n}|C|0^{n}\rangle|^{2}\,|>\epsilon\;\right]\\
&\le\frac{\delta}{8m+1}+O(m\Delta).
\end{align*}
By the union bound, the probability that at least one of the $8m+1$
points evaluated by $\mathcal{O}$ has error larger than $\epsilon$
is at most $\delta+(8m+1)O(m\Delta)$. By choosing $\Delta=\Theta(m^{-k})$ for some constant $k> 2$, from Lemma~\ref{lem:tvd}, the error probability is at most $\delta + 1/\poly(m)$.
From Lemma \ref{lem:Robustness} we
know that if all $8m+1$ evaluation points have an error at most $\epsilon$,
then the extrapolation error from $x\in[-\Delta,\Delta]$ to $x=1$ in the
Lagrange extrapolation is given (via Eqs.~\eqref{eq:Qx_bound} and~\eqref{eq:Error_Final})
\begin{align*}
&\epsilon\:|Q(1+\Delta)|^2\:\frac{\exp\left[8m(1+\log\Delta^{-1})\right]}{\sqrt{16\pi m}}\\
&\le\epsilon\:(1+O(m\Delta))\frac{\exp\left[8m(1+\log\Delta^{-1})\right]}{\sqrt{16\pi m}}.
\end{align*}
Since $\Delta=\Theta(m^{-k})$ for some constant $k> 2$, we obtain the Lemma.
\end{proof}
It is desirable to make the oracle $\mathcal{O}$ as weak as possible.
In the following Lemma we show that this can be done at the expense
of introducing an \NP-machine. 

\begin{algorithm*}[t]
\begin{algorithmic}
\Function{$\mathcal{R}$}{$C$}
\State Draw a fixed quantum circuit $H$ according to $\mathcal{H}_{\mathcal{A}}$
\For{$i\in \{0,\dotsc,L-1\}$}
\State $x_i\gets \Delta (2i-L+1)/(L-1)$
\State $y_i\gets
\mathcal{O}\left(C(x_i)\right)\, |Q(x_i)|^2$
\Comment{\parbox[t]{6.5cm}{$C(x_i)\in\mathcal{H}_{\mathcal{A},\Delta}$ with $H$ kept fixed for all~$x_i$. 
See Eqs.~\eqref{eq:C}--\eqref{eq:P_over_Q-1-1} and Def.~\ref{def:H_A}.}}
\EndFor
\State $l\gets 0$
\State $r\gets 2$
\Loop \quad $\mathrm{poly}(m)$ times
  \State $c\gets (l + r) / 2$
  \If{$W\left(1_d, (x_i,y_i)_{i=0}^{L-1}, l, c\right)$} \Comment{$W$ is an \NP-oracle}
    \State $r\gets c$
  \Else
    \State $l\gets c$
  \EndIf
\EndLoop
\State \Return $l$ \Comment{$l$ is the approximation for $\tilde{p}(1)$}
\EndFunction
 \end{algorithmic}
\caption{The reduction algorithm $\mathcal{R}$ where $d\defeq 8m$ and $L\defeq \lceil(d+1)/\delta\rceil$}\label{alg:R}
\end{algorithm*}

The high level idea for the following lemma is that if we know the points $x_i$ for which the additive error committed by the classical oracle $O$ is sufficiently small then we can apply Lemma~\ref{lem:Linterpolate} to do Lagrange extrapolation. However, to find those points we need to call an \NP-oracle. Then we call the classical oracle $O$ that succeeds in approximating $|\langle0^{n}|C|0^{n}\rangle|^{2}$ with probability at least $\frac{3}{4}+\delta$ over the random choice of quantum circuits.

\begin{lem}[Weak oracle]\label{lem:weak_oracle}
Let $\delta>0$ be a constant and $\mathcal{O}$ a classical oracle that takes as input the classical description of the quantum circuit $C$ in the architecture $\mathcal{A}$ and outputs $\mathcal{O}(C)$ that satisfies
\[
\Pr_{C\sim\mathcal{H_{A}}}\left[\;|\,\mathcal{O}(C)-|\langle0^{n}|C|0^{n}\rangle|^{2}\,|\le\epsilon\;\right]\ge\frac{3}{4}+\delta.
\]
Then there exists a classical probabilistic algorithm $\mathcal{R}$ with the oracle access to $\mathcal{O}$ and an \NP-machine which outputs $\mathcal{R}^{\mathcal{O,\NP}}(C)$ in time $\poly(n, m)$ satisfying:
\begin{align}
&\Pr_{\mathcal{R}}\Bigl[\big|\mathcal{R}^{\mathcal{O},\NP}(C)-|\langle0^{n}|C|0^{n}\rangle|^{2}\big|<\epsilon
\exp\left[O(m\log \frac{m}{\delta})\right]\Bigr]\nonumber\\
&\ge\frac{1}{2}+\delta-\frac1{\poly(m)}\;.\label{eq:WeakLemma}
\end{align}
\end{lem}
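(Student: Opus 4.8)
The plan is to mimic the proof of Lemma~\ref{lem:strong_oracle} but replace the deterministic Lagrange extrapolation with one mediated by an \NP-machine, since now the oracle $\mathcal{O}$ is only correct on a $\tfrac34+\delta$ fraction of circuits and we cannot assume that \emph{all} evaluation points $x_i$ are good. First I would fix the architecture-preserving path: draw $H\sim\mathcal{H}_{\mathcal{A}}$ and form $C(x_i)$ via the Cayley path at the $L=\lceil(d+1)/\delta\rceil$ equally spaced points $x_i=\Delta(2i-L+1)/(L-1)$, with $d=8m$, exactly as in Algorithm~\ref{alg:R}. Each $C(x_i)$ is distributed according to $\mathcal{H}_{\mathcal{A},\Delta}$, whose total variation distance from $\mathcal{H}_{\mathcal{A}}$ is $O(m\Delta)$ by Lemma~\ref{lem:tvd}; choosing $\Delta=\Theta(m^{-k})$ for a large constant $k$ makes this $o(1/\poly(m))$. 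Hence for each $i$ the probability (over the choice of $H$, marginally) that $\mathcal{O}(C(x_i))$ is within $\epsilon$ of the true amplitude is at least $\tfrac34+\delta-O(m\Delta)$, so by Markov's inequality the expected fraction of ``bad'' indices $i$ is at most $\tfrac14-\delta+o(1)$, and with probability at least $\tfrac12+\delta-o(1)$ (a second Markov/averaging step, or just the fact that the fraction of bad points is $<\tfrac14$ often enough) strictly fewer than $L/4$ of the $L$ points are bad.

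Next I would set up the \NP-predicate. Given the data $(x_i,y_i)_{i=0}^{L-1}$ with $y_i=\mathcal{O}(C(x_i))|Q(x_i)|^2$, a threshold $d=8m$, and two rationals $l<c$, define $W(1_d,(x_i,y_i),l,c)$ to accept iff there exists a subset $S\subseteq\{0,\dots,L-1\}$ with $|S|\ge 3L/4$ and a polynomial $\tilde p$ of degree at most $d$ such that $|\tilde p(x_i)-y_i|\le\epsilon$ for all $i\in S$ and $\tilde p(1)\le c$ (and symmetrically a companion predicate, or a two-sided version, for the lower bound $\tilde p(1)\ge l$). The witness $(S,\tilde p)$ has $\poly(m)$ bits, and checking the inequalities is polynomial time, so $W\in\NP$. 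The binary search in the loop of Algorithm~\ref{alg:R}, run $\poly(m)$ times on $[l,r]=[0,2]$ (legitimate since the true $p_0(1)$ is a probability in $[0,1]$ and the extrapolation error bound keeps $\tilde p(1)$ in a bounded range), then pins down to within $2^{-\poly(m)}$ a value of $\tilde p(1)$ realized by \emph{some} degree-$d$ polynomial agreeing with $y_i$ to within $\epsilon$ on at least $3L/4$ of the points.

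The crux is the uniqueness/error-control step: I must argue that \emph{any} such $\tilde p$ returned by the \NP-search is close to the exact polynomial $p_e(x)=|Q(x)|^2 p_0(x)$ of degree $8m$ at $x=1$. Here is where I would invoke Lemma~\ref{lem:Linterpolate} rather than Lemma~\ref{lem:Robustness}: when fewer than $L/4$ of the $L$ points are bad and the witness set $S$ has $|S|\ge 3L/4$, the intersection $S\cap\{\text{good points}\}$ has size at least $3L/4-L/4=L/2\ge d+1$ (using $L=\lceil(d+1)/\delta\rceil$ and $\delta<\tfrac14$, so $L/2>2(d+1)$, comfortably). Pick any $d+1$ indices $a_0<\dots<a_d$ from this intersection; on these points both $\tilde p$ and $p_e$ lie within $\epsilon$ of the same values $|\langle0^n|C(x_{a_j})|0^n\rangle|^2|Q(x_{a_j})|^2$, so the difference polynomial $p(x)=\tilde p(x)-p_e(x)$, of degree at most $d=8m$, satisfies $|p(x_{a_j})|\le 2\epsilon$ at these $d+1$ points drawn from the $L$-point grid. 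Lemma~\ref{lem:Linterpolate} then gives
\[
|p(1)|\le 2\epsilon\,\frac{\exp\!\big[d\big(1+\log\big((1+\Delta^{-1})\tfrac{L-1}{d}\big)\big)\big]}{\sqrt{2\pi d}}.
\]
With $d=8m$, $\Delta=\Theta(m^{-k})$, and $L-1=\Theta(m/\delta)$, the exponent is $O\!\big(m\log\tfrac{m}{\delta}\big)$, and folding in $|Q(1+\Delta)|^2\le 1+O(m\Delta)$ to convert back from $p_e$ to $p_0$ costs only a $1+o(1)$ factor. Therefore $|\tilde p(1)-p_e(1)|\le\epsilon\exp[O(m\log\tfrac{m}{\delta})]$, and the binary search approximates $\tilde p(1)$ to within $2^{-\poly(m)}$, so the output is within $\epsilon\exp[O(m\log\tfrac m\delta)]$ of $|\langle0^n|C|0^n\rangle|^2$. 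This succeeds whenever the ``fewer than $L/4$ bad points'' event holds, i.e.\ with probability at least $\tfrac12+\delta-\tfrac1{\poly(m)}$, giving Eq.~\eqref{eq:WeakLemma}.

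The main obstacle I anticipate is the probability bookkeeping: getting the success probability up to $\tfrac12+\delta$ rather than something weaker requires being careful about how the $O(m\Delta)$ TVD slack and the Markov step interact, and about the precise relation $|S|\ge 3L/4$ together with $L\ge(d+1)/\delta$ forcing the good-point overlap to exceed $d+1$. A secondary point needing care is verifying that the \NP-predicate $W$ is genuinely in \NP\ (the witness must be polynomial-size, which it is since $\tilde p$ is a degree-$8m$ polynomial with $\poly(m)$-bit coefficients and $S$ is an $L$-bit string) and that a two-sided binary search correctly brackets the unique-up-to-$\epsilon\exp[O(m\log\tfrac m\delta)]$ value $p_e(1)$; the extrapolation bound above is exactly what guarantees all feasible $\tilde p(1)$ lie in an interval of that width, so the search is well-posed.
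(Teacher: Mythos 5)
Your overall architecture matches the paper's: fix one random $H$, query $\mathcal{O}$ on the Cayley-path circuits $C(x_i)$ at $L=\lceil(d+1)/\delta\rceil$ grid points, binary-search with an \NP-predicate over degree-$d$ polynomials fitting a large fraction of the data, and control $|\tilde p(1)-p_e(1)|$ by applying Lemma~\ref{lem:Linterpolate} to $d+1$ points in the intersection of the two ``good'' sets. However, the probability bookkeeping---which you yourself flag as the delicate point---genuinely breaks with your choice of thresholds. The only guarantee available is marginal: each $C(x_i)$ individually is $\epsilon$-well-approximated with probability at least $\tfrac34+\delta-O(m\Delta)$. The bad events at different $x_i$ are arbitrarily correlated (all $L$ circuits are deterministic functions of the single draw $H$, and the oracle's failure set may conspire along the path), so no Chernoff-type concentration or ``second averaging step'' is available; Markov's inequality on the expected number of bad points is essentially all you have. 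With your event ``fewer than $L/4$ bad points,'' Markov gives only $\Pr[\#\mathrm{bad}<L/4]\ge 1-\bigl(\tfrac14-\delta+O(m\Delta)\bigr)/\tfrac14=4\delta-O(m\Delta)$, which falls short of the required $\tfrac12+\delta-\tfrac1{\poly(m)}$ whenever $\delta<1/6$, and in the intended application $\delta$ is as small as $1/\poly(n)$.

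The fix, which is what the paper does, is to lower the agreement threshold in the \NP-predicate from $3L/4$ to $(1+\delta)L/2$: the relevant event becomes ``at least $(1+\delta)L/2$ good points,'' i.e.\ at most $(1-\delta)L/2$ bad ones, and Markov then gives probability at least $\tfrac12+\tfrac32\tfrac{\delta}{1-\delta}-O(m\Delta)\ge\tfrac12+\delta-\tfrac1{\poly(m)}$. The interpolation step still goes through because any certificate $\tilde p$ matches on at least $(1+\delta)L/2$ points and $p_e$ matches on at least $(1+\delta)L/2$ points, so their intersection has size at least $(1+\delta)L-L=\delta L\ge d+1$ precisely by the choice $L=\lceil(d+1)/\delta\rceil$; this is where the slack you spent on ``$L/2>2(d+1)$'' must instead be spent on surviving Markov. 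The remaining ingredients of your write-up (the $|Q(x_i)|^2$ factors, the $[0,2)$ bracket via $p_e(x)\le|Q(x)|^2<2$, polynomial-size witnesses for $W$, and the final bound $\epsilon\exp[O(m\log\tfrac m\delta)]$ from Lemma~\ref{lem:Linterpolate}) are consistent with the paper's proof.
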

\begin{proof}
We first describe the probabilistic algorithm $\mathcal{R}$ for the reduction.
Then, we show that the algorithm $\mathcal{R}$ satisfies the conditions in the lemma.
We define the \NP-oracle $W$ as the oracle that solves the following \NP problem:

\begin{itemize}[leftmargin=*, align=left]
\item[{\bf (Input)}] A positive integer $d$ in the unary representation, $L$ pairs $\{(x_{i},y_{i})\in\mathbb{R}^2\}_{i\in\{0,1,\dotsc,L-1\}}$,
and $l, r\in\mathbb{R}$ such that $l<r$.

\item[{\bf (Output)}] {\it True}: if there exists a polynomial $\tilde{p}(x)=\sum_{j=0}^{d}a_{j}x^{j}$
such that
\begin{align*}
    &\left|\{i\in\{0,\dotsc,L-1\}\, :\, |\tilde{p}(x_{i})-y_{i}|\le|Q(x_i)|^2\epsilon\}\right|\\
    &\ge (1+\delta)L/2,
\end{align*}
and $\tilde{p}(1)\in[l,r)$. {\it False}: otherwise.
\end{itemize}

This problem is in \NP since for a given certificate $\tilde{p}(x)$,
the conditions above can be verified in polynomial time.
We now describe the reduction algorithm $\mathcal{R}$ shown in Algorithm~\ref{alg:R}. Then, we will show that algorithm $\mathcal{R}$ satisfies the conditions of the lemma.

$\mathcal{R}$ is a probabilistic polynomial-time algorithm accessing the oracle $\mathcal{O}$ and the \NP-oracle $W$.
 We set $\Delta=\Theta(m^{-k})$ for some constant $k\ge 2$ 
so that the total variation distance between $\mathcal{H}_\mathcal{A}$ and $\mathcal{H}_{\mathcal{A},\Delta}$ is  $O(m\Delta)=1/\poly(m)$ 
by Lemma~\ref{lem:tvd}.
By $L$ calls to the oracle $\mathcal{O}$, the probability that at least $(1+\delta)L/2$ points are computed
with error of at most $\epsilon$ is at least $1/2+\delta-1/\poly(m)$; the latter follows from Markov's inequality (see
Appendix~\ref{apx:conc} for details).
So we can assume that at least $(1+\delta)L/2$ points are computed with error at most $\epsilon$.
In other words,
\begin{align*}
    &\left|\{i\in\{0,\dotsc,L-1\}\, : \, |p_e(x_{i})-y_{i}|\le |Q(x_i)|^2\epsilon\}\right|\\
    &\ge (1+\delta)L/2
\end{align*}
for the degree-$d$ polynomial $p_e(x) \defeq  |\bra{0^n} P(x)\ket{0^n}|^2$ in Eq.~\eqref{eq:Px}.
In this case, $W\left(1_{d}, (x_i,y_i)_{i=0}^{L-1}, 0,2\right)$ is true where $[0,2)$ is the initial region for the binary search since $p_e(x)$ can be seen as a certificate that satisfies the conditions in $W$, and by Eq.~\eqref{eq:Qx_bound} we have $p_e(x)\le |Q(x)|^2< 2$.
From the binary search in Algorithm~\ref{alg:R}, we obtain $l$ and $r$ such that $0\le r-l\le 2^{-\poly(m)}$ and $W(1_d,(x_i,y_i)_{i=0}^{L-1},l,r)$ is true. Let $\widetilde{p}(x)=\sum_{j=0}^da_jx^j$ be a certificate for $W(1_d,(x_i,y_i)_{i=0}^{L-1},l,r)$. Then, $\widetilde{p}(x)$ satisfies
\begin{align*}
 &\left|\{i\in\{0,\dotsc,L-1\}\, : \, |\tilde{p}(x_{i})-y_{i}|\le|Q(x_i)|^2\epsilon\}\right|\\
 &\ge (1+\delta)L/2\;,
\end{align*}
 and $|\tilde{p}(1)-\mathcal{R}(C)|\le2^{-\mathrm{poly}(m)}$ because $\mathcal{R}(C)=l$ and $\widetilde{p}(1)\in[l,r)$.
 
 In the following, we show that $|\tilde{p}(1) - p_e(1)| \le \epsilon\exp\{O(m\log m)\}$.
Define the two sets $S_{p_e}$ and $S_{\tilde{p}}$ by
\begin{align*}
S_{p_e} & =  \left\{ i\in\{0,\dotsc,L-1\}\; : \;\left|p_e(x_{i})-y_{i}\right|\le|Q(x_i)|^2\epsilon \right\} ,\\
S_{\tilde{p}} & =  \left\{ i\in\{0,\dotsc,L-1\}\; : \;\left|\tilde{p}(x_{i})-y_{i}\right|\le|Q(x_i)|^2\epsilon\right\} .
\end{align*}
Since $S_{p_e}$ and $S_{\tilde{p}}$ have sizes that are at least $(1+\delta)L/2$ there exists a non-empty intersection. We have
$|S_{p_e}\cap S_{\tilde{p}}|=|S_{p_e}|+|S_{\tilde{p}}|-|S_{p_e}\cup S_{\tilde{p}}|\ge(1+\delta)L-L\ge\delta L\ge d+1$.

Since $p_e(x)$ and $\tilde{p}(x)$ are degree
$d$ polynomials by assumption, and $|p_e(x_i)-\tilde{p}(x_i)|\le|p_e(x_i)-y_i|+|y_i-\tilde{p}(x_i)|\le2|Q(x_i)|^2\epsilon$
for at least $d+1$ points in $x_0,\dotsc,x_{L-1}\in[-\Delta,\Delta]$. Then by Lemma~\ref{lem:Linterpolate} we obtain the
desired result 
\begin{align*}
&|p_e(1)-\tilde{p}(1)|\le\epsilon(2+O(m\Delta))\\
&\qquad\times\exp\left[d\left(1+\log\left((1+\Delta^{-1})\frac{L-1}{d}\right)\right)\right].
\end{align*}
Hence, $|p_e(1)-\mathcal{R}(C)|\le \epsilon(2+O(m\Delta))\allowbreak\exp\left[d(1+\log((1+\Delta^{-1})(L-1)/d))\right] + 2^{-\mathrm{poly}(m)}.$
Since $\Delta=\Theta(m^{-k})$ for some constant $k\ge 2$, we obtain Eq.~\eqref{eq:WeakLemma} in the Lemma.
\end{proof}
If we take $\delta=1/\poly(n)$, the success probability of $1/2+\delta$
can be boosted to a constant greater than $1/2$ by $O(\delta^{-2})$
calls to the oracle.

\section{\label{sec:CEQP}$\sharpP$-hardness of the computation of probability amplitude for the worst-case quantum circuit}
\begin{dfn}[$\mathsf{\sharpP}$~\cite{arora2009computational}]\label{def:sharpP}
A function $f\colon\{0,1\}^*\to\mathbb{N}$ is in $\sharpP$ if there exists a polynomial $p\colon\mathbb{N}\to\mathbb{N}$ and a polynomial-time deterministic
Turing machine $M$ such that for every $x\in\{0,1\}^*$,
\begin{equation*}
    f(x) = \left|\left\{y\in\{0,1\}^{p(|x|)}:\, M(x,y)=1\right\}\right|.
\end{equation*}
\end{dfn}
\begin{lem}[Equivalent to Thm 3.2 in Fenner et al \cite{fenner1998determining}]
\label{thm:Ampl_GapP}
For any $f\in\sharpP$ there is a polynomial-time uniform family of
quantum circuits $\{C_{\ell}(x)\}$ and a polynomial $p$ such that for
all $x$ of length $\ell$,
\begin{equation}\label{eq:Lem8}
|\langle0^{p(\ell)}|C_{\ell}(x)|0^{p(\ell)}\rangle|^{2}=\left(1-\frac{f(x)}{2^{p(\ell)-1}}\right)^2.
\end{equation}
\end{lem}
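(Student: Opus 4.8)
\textbf{Proof proposal for Lemma~\ref{thm:Ampl_GapP}.}

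The plan is to realize the quantity $1 - f(x)/2^{p(\ell)-1}$ as an explicit amplitude $\langle 0^{p(\ell)}|C_\ell(x)|0^{p(\ell)}\rangle$ by a standard ``compute-phase-uncompute'' construction built from the $\#\mathsf{P}$ machine $M$ of Definition~\ref{def:sharpP}. First I would recall that for $f\in\sharpP$ there is a polynomial-time deterministic machine $M$ and a polynomial $q$ with $f(x)=|\{y\in\{0,1\}^{q(|x|)}: M(x,y)=1\}|$. Reversibilize $M$ into a quantum circuit $U_M$ acting on a $y$-register of $q(\ell)$ qubits, a work register, and one answer qubit, so that $U_M|y\rangle|0\rangle|0\rangle = |y\rangle|g(x,y)\rangle|M(x,y)\rangle$ where $g$ is garbage. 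The core circuit is then: (i) apply Hadamards to the $y$-register to create the uniform superposition $2^{-q(\ell)/2}\sum_y |y\rangle$; (ii) apply $U_M$; (iii) apply a $Z$ gate (a phase flip) on the answer qubit, producing the phase $(-1)^{M(x,y)}$ on each branch; (iv) apply $U_M^\dagger$ to uncompute the garbage and the answer qubit; (v) apply Hadamards again to the $y$-register.

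The key step is the amplitude computation: after (v), the amplitude to return to $|0^{q(\ell)}\rangle\otimes|0\rangle\otimes|0\rangle$ is
\[
\langle 0^{q(\ell)}| H^{\otimes q(\ell)} \Bigl(2^{-q(\ell)/2}\sum_y (-1)^{M(x,y)}|y\rangle\Bigr) = \frac{1}{2^{q(\ell)}}\sum_{y\in\{0,1\}^{q(\ell)}} (-1)^{M(x,y)}.
\]
This equals $2^{-q(\ell)}\bigl((2^{q(\ell)}-f(x)) - f(x)\bigr) = 1 - f(x)/2^{q(\ell)-1}$, which is exactly the desired form with $p(\ell)$ taken to be $q(\ell)$ plus the sizes of the work and answer registers (one pads the extra ancillas, which start and end in $|0\rangle$, so they do not change the amplitude). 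Squaring gives Eq.~\eqref{eq:Lem8}. I would then note the circuit is polynomial-time uniform because $M$ runs in polynomial time and the reversibilization (Bennett's trick) and Hadamard layers are all polynomial-size and uniformly generated; the dependence on $x$ is only through the description of the input to $M$, so the family $\{C_\ell(x)\}$ is uniform in $\ell$ with $x$ hard-wired.

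The main obstacle, such as it is, is bookkeeping rather than conceptual: one must be careful that $U_M$ is genuinely unitary (hence the need to uncompute all intermediate work qubits so that step (iv) exactly cancels step (ii) on the non-phase registers), and that the ancilla count $p(\ell)$ is a fixed polynomial independent of $y$. A secondary point is that the statement is attributed as ``equivalent to'' Theorem~3.2 of Fenner et al.; their formulation is in terms of $\mathsf{GapP}$ functions and $\mathrm{Re}\langle 0|C|0\rangle$, so I would include a sentence reconciling the two: writing $f(x)=2^{q(\ell)-1}(1-s)$ where $s$ is the normalized gap $2^{-q(\ell)}\sum_y(-1)^{M(x,y)}$ shows that computing $f$ from the amplitude and vice versa are polynomial-time interreducible, which is all that is needed downstream in Lemma~\ref{lem:SharpP}. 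Everything else is routine.
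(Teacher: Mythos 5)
Your construction is, at its core, the same as the paper's: the paper defines $C_\ell(x)=H^{\otimes p(\ell)}V_xH^{\otimes p(\ell)}$ with $V_x=\sum_y(-1)^{M(x,y)}|y\rangle\langle y|$ and computes the amplitude $2^{-p(\ell)}\sum_y(-1)^{M(x,y)}=1-f(x)/2^{p(\ell)-1}$; your steps (i)--(v) are exactly the compute--phase--uncompute compilation of $V_x$, and your amplitude calculation is identical.

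The one place where your write-up does not deliver the equation as stated is the padding step. In your circuit the Hadamarded witness register has $q(\ell)$ qubits while the whole register has $q(\ell)+w(\ell)+1$ qubits (work plus answer), and the amplitude you compute is $1-f(x)/2^{q(\ell)-1}$. Declaring ``$p(\ell)$ to be $q(\ell)$ plus the sizes of the work and answer registers'' therefore does not yield Eq.~\eqref{eq:Lem8}: the exponent in the denominator is tied to the number of Hadamarded witness qubits, not to the total qubit count, so $1-f(x)/2^{q(\ell)-1}\neq 1-f(x)/2^{p(\ell)-1}$ as soon as $w(\ell)+1>0$. You cannot repair this by Hadamarding the ancillas as well (they must be clean for $U_M$ to act correctly), nor by padding the witness length up to the total register size (computing the padded predicate again needs workspace, so the mismatch reappears). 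Two clean fixes: (a) follow the paper and take $V_x$ itself as a uniformly specified diagonal operation on exactly $p(\ell)$ qubits, which is what makes the literal statement come out --- the paper leaves the gate-level compilation of $V_x$ implicit; or (b) keep your compiled circuit but state the conclusion as $|\langle 0^{n(\ell)}|C_\ell(x)|0^{n(\ell)}\rangle|^{2}=\left(1-f(x)/2^{q(\ell)-1}\right)^2$ with $n(\ell)\ge q(\ell)$ the total qubit count; this weaker form is all that Lemma~\ref{lem:SharpP} actually uses, with only minor bookkeeping changes to the required additive error. Your remaining points (Bennett reversibilization, uniformity, and the reconciliation with the $\mathsf{GapP}$ formulation of Fenner et al.) are fine.
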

\begin{proof}
For any $f\in\sharpP$, there is a poly-time deterministic Turing machine $M(x,y)$ such that 
$f(x) = |\{y\in\{0,1\}^{p(\ell)}:\, M(x,y)=1\}|$
from Definition~\ref{def:sharpP} where $p(\ell)$ is a polynomial in $\ell=|x|$.
Let the poly-time uniform family of quantum circuits be $C_{\ell}(x)=H^{\otimes p(\ell)}V_{x}H^{\otimes p(\ell)}$
where $H$ is the Hadamard gate and
\[
V_{x}=\sum_{y\in\{0,1\}^{p(\ell)}}(-1)^{M(x,y)}|y\rangle\langle y|.
\]
Then
\begin{align*}
&\langle0^{p(\ell)}|C_{\ell}(x)|0^{p(\ell)}\rangle=\frac{1}{2^{p(\ell)}}\sum_{y\in\{0,1\}^{p(\ell)}}(-1)^{M(x,y)}\\
&=\frac{1}{2^{p(\ell)}}\left(\left|\left\{ y:\, M(x,y)=0\right\} \right|-\left|\left\{ y:\, M(x,y)=1\right\} \right|\right)\\
&=1-\frac{f(x)}{2^{p(\ell)-1}}.
\end{align*}
\end{proof}
\begin{lem}\label{lem:SharpP}
Let $\mathcal{O}$ be an oracle that for an arbitrary given quantum
circuit $C$ with $m$ gates computes $|\langle0^{n}|C|0^{n}\rangle|^{2}$ with
the additive error less than $2^{-m^{\mu}}$ for some constant $\mu>0$.
Then there exists an $\mathsf{FP}^{\mathcal{O}}$ algorithm that solves any $\sharpP$
problem.
\end{lem}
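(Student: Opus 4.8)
The plan is to reduce any $\sharpP$ function $f$ to a single worst‑case query of $\mathcal{O}$, combining Lemma~\ref{thm:Ampl_GapP} with circuit padding. Fix $f\in\sharpP$ given by a polynomial‑time machine $M$ and polynomial $q$ with $f(x)=|\{y\in\{0,1\}^{q(\ell)}:M(x,y)=1\}|$ for $|x|=\ell$. To leave room for a square‑root inversion later, I first enlarge the witness by two dummy bits: let $M'(x,(y,z))=1$ iff $M(x,y)=1$ and $z=00$, which computes the same $f$ but now over $p(\ell)\defeq q(\ell)+2$ bits, so that $f(x)\le 2^{q(\ell)}=2^{p(\ell)-2}$. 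Applying (the proof of) Lemma~\ref{thm:Ampl_GapP} to $M'$ yields a circuit $C_\ell(x)=H^{\otimes p(\ell)}V_xH^{\otimes p(\ell)}$ with
\[
v\;\defeq\;|\langle 0^{p(\ell)}|C_\ell(x)|0^{p(\ell)}\rangle|^2\;=\;\Bigl(1-\tfrac{f(x)}{2^{p(\ell)-1}}\Bigr)^2 ,
\]
and the headroom $f(x)\le 2^{p(\ell)-2}$ forces $1-f(x)/2^{p(\ell)-1}\in[\tfrac12,1]$, hence $v\in[\tfrac14,1]$ and $f(x)=2^{p(\ell)-1}\bigl(1-\sqrt v\bigr)$ with the sign of the square root unambiguously positive.

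Second, I would make this circuit local. The diagonal unitary $V_x=\sum_y(-1)^{M'(x,y)}|y\rangle\langle y|$ is realized by the standard phase‑kickback construction: since $M'(x,\cdot)$ runs in time $\poly(\ell)$, there is a $\poly(\ell)$‑size circuit of one‑ and two‑qubit gates (plus ancillas initialized to $|0\rangle$) computing $M'(x,y)$ into one ancilla bit; a single $Z$ on that bit applies the phase $(-1)^{M'(x,y)}$, and uncomputation restores the ancillas to $|0\rangle$. Together with the Hadamard layers this gives a local circuit $C$ on $n=\poly(\ell)$ qubits with $m_0=\poly(\ell)$ gates, and since the ancillas start and end in $|0\rangle$ we have $|\langle 0^n|C|0^n\rangle|^2=v$.

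Third, I pad $C$ with identity gates (or with pairs $G,G^{-1}$ of local gates) so that the resulting circuit has $m\ge(p(\ell)+2)^{1/\mu}$ gates; this costs only $\poly(\ell)$ extra gates because $\mu>0$ is a fixed constant. One call to $\mathcal{O}$ on the padded circuit returns $\tilde v$ with $|\tilde v-v|<2^{-m^\mu}\le 2^{-(p(\ell)+2)}$. Since $v\ge\tfrac14$ and $\tilde v>0$, we have $|\sqrt{\tilde v}-\sqrt v|\le|\tilde v-v|/\sqrt v\le 2|\tilde v-v|$, so the estimate $\tilde f\defeq 2^{p(\ell)-1}(1-\sqrt{\tilde v})$ obeys
\[
|\tilde f-f(x)|\;\le\;2^{p(\ell)-1}\cdot 2\cdot 2^{-(p(\ell)+2)}\;=\;\tfrac14\;<\;\tfrac12 ,
\]
so rounding $\tilde f$ to the nearest integer outputs $f(x)$ exactly. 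Constructing and padding $C$, the single oracle query, and computing and rounding $\tilde f$ (to $\poly(m)$ bits) are all deterministic polynomial time, so this is an $\mathsf{FP}^{\mathcal{O}}$ algorithm for $f$, and $f$ was an arbitrary $\sharpP$ function.

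I expect the only delicate part to be the third step: one must (i) resolve the sign ambiguity of $\sqrt v$, handled by the two extra witness bits that keep $v\ge\tfrac14$; (ii) choose the padding so the oracle's additive error $2^{-m^\mu}$ is driven below the $2^{-p(\ell)}$‑scale gap needed to pin down the integer $f(x)$, which is possible precisely because $\mu$ is a constant and hence $(p(\ell)+2)^{1/\mu}$ is still polynomial; and (iii) verify that the $\poly$‑bit rounding inside $\mathcal{O}$ and of $\sqrt{\tilde v}$ does not consume the $\tfrac14$ slack, which is immediate. Everything else is forced once Lemma~\ref{thm:Ampl_GapP} and the phase‑kickback implementation of $V_x$ are in hand.
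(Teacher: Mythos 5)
Your proposal is correct and follows the same skeleton as the paper's proof: build the Fenner-style circuit $H^{\otimes p}V_xH^{\otimes p}$ of Lemma~\ref{thm:Ampl_GapP}, remove the ambiguity in recovering the count from the \emph{squared} amplitude, pad the circuit with identity (or cancelling) gates so that the worst-case oracle's error $2^{-m^{\mu}}$ drops below the resolution needed (possible since $\mu$ is constant), and recover $f(x)$ exactly from a single query by rounding. The one genuine difference is how the ambiguity is resolved: the paper replaces $f$ by the $\sharpP$ function $g(x)=f(x)+2^{p(|x|)}$ (via $G(x,y,z)=z\vee M(x,y)$), which forces $1-g(x)/2^{n-1}\in[-1,0]$ and then reads $g$ off the squared value directly, requiring additive accuracy $2^{-(2n-1)}$; you instead pad the witness with two dummy bits so that $f(x)\le 2^{p(\ell)-2}$, which keeps the amplitude in $[\tfrac12,1]$ and lets you invert the square root stably ($v\ge\tfrac14$), needing only accuracy about $2^{-(p(\ell)+2)}$. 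Both are polynomially equivalent, so this is a cosmetic (if slightly tighter) variant; you also make explicit the compilation of $V_x$ into local gates by compute--$Z$--uncompute, which the paper leaves implicit in the phrase ``poly-time uniform family of quantum circuits.'' Your error analysis ($|\sqrt{\tilde v}-\sqrt{v}|\le|\tilde v-v|/\sqrt{v}\le 2|\tilde v-v|$, hence $|\tilde f-f(x)|\le\tfrac14$) is sound, so no gaps.
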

\begin{proof}
Let $f$ be an arbitrary $\sharpP$ function.
Let $M(x,y)$ be defined as before such that 
$f(x) = |\{y\in\{0,1\}^{p(|x|)}:\, M(x,y)=1\}|$. 

 Since in the two cases $f(x)=2^{p(|x|)-1}\pm c$ one obtains the same probability  (Eq.~\eqref{eq:Lem8}), we introduce a function $g$ that is sign unambiguous with respect to $1-\frac{g(x)}{2^{p(\ell)-1}}$ and is in one-to-one correspondence with the the probability amplitude. Let $g\colon \{0,1\}^*\to\mathbb{N}$ be a function defined by $g(x) = f(x) + 2^{p(|x|)}$.
Here, $G(x,y,z)\equiv z\vee M(x,y)$ satisfies $g(x)=|\{y\in\{0,1\}^{p(|x|)}, z\in\{0,1\} :\, G(x,y,z)=1\}|$.
Since $G(x,y,z)$ is computable in polynomial time, $g$ is also a $\sharpP$ function.
We will show an $\mathsf{FP}^{\mathcal{O}}$ algorithm that computes $g(x)$, and therefore it also computes $f(x)$.

As in Lemma~\ref{thm:Ampl_GapP}, there exists poly-time uniform family of quantum circuits $C_\ell(x)$ such that
$|\langle 0^{n}|C_\ell(x)|0^{n}\rangle|^2=(1-g(x)/2^{n-1})^2$ where $n\equiv p(|x|)+1$. The value of $g(x)$ is exactly determined from the approximation of $(1-g(x)/2^{n-1})^2$ to within the
additive error less than $1/2^{2n-1}$. Therefore, to compute $f(x)$, it is sufficient to compute $(1-g(x)/2^{n-1})^2$ with
additive error less than $1/2^{2n-1}$.

For computing $|\langle 0^n|C_\ell(x)|0^n\rangle|^2$ with small additive error with the aid of the oracle $\mathcal{O}$,
we use an enlarged quantum circuit $C_\ell(x) \cdot \mathbb{I}_1^k$, where $\mathbb{I}_1^k$ is $k$ identity gates acting on the first qubit (the number of qubits remains the same).
Since $|\langle 0^n|C_\ell(x)\cdot\mathbb{I}_1^k|0^n\rangle|^2 = |\langle 0^n|C_\ell(x)|0^n\rangle|^2$,  by assumption of the lemma
$\mathcal{O}(C_\ell(x)\cdot\mathbb{I}_1^k)$ outputs $(1-g(x)/2^{n-1})^2$ with additive error less than $2^{-(m+k)^\mu}$
where $m$ is the number of gates in $C_\ell(x)$.
By choosing $k=\lceil(2n)^{1/\mu}\rceil$, the additive error is upper bounded by $2^{-k^\mu}\le 2^{-2n}$.
\end{proof}
These prove our main theorems which we restate along with their proofs:
\setcounter{thm}{0}
\begin{thm}
There is an architecture $\mathcal{A}$ such that it is $\sharpP$-hard under $\mathsf{BPP}$-reduction
to approximate $|\braket{0|C|0}|^2$ with probability $1-O(1/m)$ over the choice of $C\sim\mathcal{H}_\mathcal{A}$  to within the additive error $2^{-\Omega(m \log m)}$.
\end{thm}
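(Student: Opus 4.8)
The statement is Theorem~\ref{thm:strong} together with an explicit choice of architecture, and the plan is to obtain it by composing the two reductions of Figure~\ref{fig:strong}: the $\mathsf{P}$-reduction of Lemma~\ref{lem:SharpP} from an arbitrary $\sharpP$ problem to the \emph{worst-case} approximation of $|\langle0^n|C|0^n\rangle|^2$, followed by the $\mathsf{BPP}$-reduction of Lemma~\ref{lem:strong_oracle} from that worst-case problem to the average-case one. First I fix the architecture: let $f$ be a fixed $\sharpP$-complete function, take the poly-time uniform circuit family arising in the proof of Lemma~\ref{lem:SharpP} (namely Lemma~\ref{thm:Ampl_GapP} applied to the sign-unambiguous $\sharpP$ function $g$ obtained from $f$), padded on the first qubit with the identity gates used there, and let $\mathcal{A}=\{A_\ell\}$ be the (local, poly-time uniform) architecture of this padded family with all gates blanked. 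For an input $x$ with $|x|=\ell$ this architecture is realised by a concrete circuit $C_x^\ast$ with $m=m(\ell)=\poly(\ell)$ gates whose output probability $|\langle0^n|C_x^\ast|0^n\rangle|^2$ determines $f(x)$ once it is known to additive error $2^{-2n}$.

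\textbf{Composing the reductions.} Given a classical oracle $\mathcal{O}$ that for a fraction $\ge 1-\delta/(8m+1)$ of $C\sim\mathcal{H}_\mathcal{A}$ approximates $|\langle0^n|C|0^n\rangle|^2$ to within $\epsilon=2^{-\Omega(m\log m)}$ (this is the precise meaning of the ``$1-O(1/m)$ fraction'' hypothesis), I invoke Lemma~\ref{lem:strong_oracle} with a constant $\delta\in(0,1/2)$, say $\delta=1/3$. This yields a classical probabilistic polynomial-time $\mathcal{R}^\mathcal{O}$ with the \emph{worst-case} guarantee that for \emph{every} circuit $C$ in $\mathcal{A}$, $\Pr_\mathcal{R}\left[\,|\mathcal{R}^\mathcal{O}(C)-|\langle0^n|C|0^n\rangle|^2|<\epsilon\exp(O(m\log m))\,\right]\ge 1-\delta-1/\poly(m)$, which exceeds $3/5$ for large $m$. (Internally $\mathcal{R}$ forms the Cayley path from $C$ as the $x=1$ hard point, queries $\mathcal{O}$ at the $8m+1$ points $C(x_i)\in\mathcal{H}_{\mathcal{A},\Delta}$, uses Lemma~\ref{lem:tvd} and a union bound to argue all of them land within $\epsilon$, Lagrange-extrapolates to $x=1$ by Lemma~\ref{lem:Robustness}, and divides out the precomputed $|Q(1)|^2\le 1+O(m\Delta)$.) Amplifying $\mathcal{R}^\mathcal{O}$ to failure probability $\le 2^{-\poly(m)}$ by $\poly(m)$ independent repetitions and a majority vote, I feed it as the worst-case oracle of Lemma~\ref{lem:SharpP} (with parameter $\mu=1$, so its $O(n)$ identity-gate padding suffices). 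This produces an $\mathsf{FP}^{\mathcal{R}^\mathcal{O}}$ procedure for $f$ — on input $x$, build $C_x^\ast$, call $\mathcal{R}^\mathcal{O}(C_x^\ast)$, read off $f(x)$ — hence an $\mathsf{FBPP}^\mathcal{O}$ algorithm, i.e.\ a $\mathsf{BPP}$-reduction from the arbitrary $\sharpP$ function $f$ to the average-case approximation problem for $\mathcal{A}$.

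\textbf{Matching the error exponents.} The only quantitative point is that the extrapolation blow-up $\exp(O(m\log m))$ must be absorbed into the slack in $\epsilon$. That blow-up constant is fixed: with $\Delta=\Theta(m^{-\kappa})$ for a constant $\kappa>2$ chosen inside the proof of Lemma~\ref{lem:strong_oracle}, it equals $8m(1+\log\Delta^{-1})=(8\kappa+o(1))\,m\log m$. Choosing the constant hidden in $\epsilon=2^{-\Omega(m\log m)}$ larger than $8\kappa+1$ makes the worst-case additive error of $\mathcal{R}^\mathcal{O}$ at most $2^{-m\log m}\le 2^{-2n}$ for all large $m$ (using $m\ge 2n$ after padding), so $\mathcal{R}^\mathcal{O}$ indeed realises the oracle required by Lemma~\ref{lem:SharpP}. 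Since $m=\poly(n)$ throughout, every stage runs in polynomial time and the residual $1/\poly(m)$ and amplification errors contribute only $o(1)$, so the composition is a genuine $\mathsf{BPP}$-reduction.

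\textbf{Main obstacle.} Essentially all of the real content is already packaged in Lemmas~\ref{lem:SharpP} and~\ref{lem:strong_oracle} (the $\sharpP$ encoding, the total-variation bound of Lemma~\ref{lem:tvd}, and the Lagrange estimate of Lemma~\ref{lem:Robustness}), so the theorem is a composition; the points that need care are purely bookkeeping: (i) identifying the architecture of Lemma~\ref{lem:strong_oracle} with the padded $\sharpP$-complete family of Lemma~\ref{lem:SharpP} so that the ``for every $C$'' guarantee of $\mathcal{R}^\mathcal{O}$ applies to the concrete hard instance $C_x^\ast$; (ii) reading ``$1-O(1/m)$ fraction'' as $\ge 1-\delta/(8m+1)$ with $\delta<1/2$ so that Lemma~\ref{lem:strong_oracle} still returns a success probability bounded away from $1/2$; and (iii) the exponent matching of the previous paragraph, which is the one place where an incautious choice of the constant hidden in $\Omega(m\log m)$ would break the argument.
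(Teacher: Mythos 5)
Your proposal is correct and follows exactly the paper's route: the paper proves this theorem in one line as ``immediate from Lemmas~\ref{lem:strong_oracle} and \ref{lem:SharpP},'' i.e.\ precisely the composition you carry out, with your architecture choice, exponent matching, and padding being the (implicit) bookkeeping the paper leaves to the reader. The only cosmetic point is that amplifying a real-valued estimator is done by taking the median of repetitions rather than a ``majority vote,'' and in fact no amplification is needed since a single call with success probability $\ge 2/3$ already yields a $\mathsf{BPP}$-reduction.
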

\begin{proof}
This is immediate from Lemmas \ref{lem:strong_oracle} and \ref{lem:SharpP}.
\end{proof}
\begin{thm}
There is an architecture $\mathcal{A}$ such that it is $\sharpP$-hard under $\mathsf{BPP}^\mathsf{NP}$-reduction
to approximate $|\braket{0|C|0}|^2$ with probability $\frac34 + \frac1{\poly(n)}$ over the choice of $C\sim\mathcal{H}_{\mathcal{A}}$ 
to within the additive error $2^{-\Omega(m \log m)}$.
\end{thm}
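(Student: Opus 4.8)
The plan is to derive the theorem by composing the average-case-to-worst-case reduction of Lemma~\ref{lem:weak_oracle} with the worst-case hardness of Lemma~\ref{lem:SharpP}, with essentially all of the work going into matching error budgets and into keeping the composed reduction inside $\mathsf{FBPP}^{\NP}$. First I would fix the architecture $\mathcal{A}$: take a fixed $\sharpP$-complete function $f_{0}$ (say $\#$SAT), let $\{C_{\ell}(x)\}$ be the poly-time uniform family of Lemma~\ref{thm:Ampl_GapP} encoding $f_{0}$, and let $\mathcal{A}$ be the \emph{local} architecture obtained after the identity-gate padding used in the proof of Lemma~\ref{lem:SharpP} (with padding parameter $k$ chosen for the exponent $\mu$ fixed below). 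With the convention $\theta=1-x$, the point $x=1$ of the Cayley path recovers exactly this $\sharpP$-hard circuit while $x=0$ gives the Haar-random circuit, so $\mathcal{H}_{\mathcal{A}}$ is the correct source distribution.

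Next, let $\mathcal{O}$ be any classical algorithm approximating $|\langle 0^{n}|C|0^{n}\rangle|^{2}$ to within additive error $\epsilon=2^{-c\,m\log m}$ for a $\tfrac34+\tfrac1{\poly(n)}$ fraction of $C\sim\mathcal{H}_{\mathcal{A}}$, where $c$ is a constant to be chosen. Invoke Lemma~\ref{lem:weak_oracle} with $\delta=1/\poly(n)$: since $\log(m/\delta)=O(\log m)$ and $m=\poly(n)$ (indeed $m=\Omega(n^{2})$ after padding), we obtain a $\poly(n,m)$-time algorithm $\mathcal{R}^{\mathcal{O},\NP}$ that, with probability $\tfrac12+\delta-\tfrac1{\poly(m)}$ over its internal coins, approximates $|\langle 0^{n}|C|0^{n}\rangle|^{2}$ to within
\[
\epsilon\,\exp\!\left[O\!\left(m\log m\right)\right]=2^{-c\,m\log m+O(m\log m)}.
\]
Choosing $c$ strictly larger than the hidden constant in the $O(m\log m)$ term makes this error at most $2^{-\Omega(m\log m)}\le 2^{-m^{1/2}}$ for all large $m$; thus $\mathcal{R}^{\mathcal{O},\NP}$ is a randomized $\NP$-oracle machine computing the worst-case amplitude to within the additive error $2^{-m^{\mu}}$ with $\mu=\tfrac12$ that Lemma~\ref{lem:SharpP} demands, and the value of $c$ enters the final statement only through the constant hidden in $2^{-\Omega(m\log m)}$.

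The remaining steps are bookkeeping. First amplify: run $\mathcal{R}^{\mathcal{O},\NP}$ for $O(\delta^{-2})=\poly(n)$ independent trials and take the median, which drives the success probability to $1-2^{-\poly(n)}$ while keeping the procedure a polynomial-time randomized algorithm with an $\NP$ oracle. Then feed this amplified procedure into the argument of Lemma~\ref{lem:SharpP} in place of the deterministic oracle $\mathcal{O}$ there: that argument makes only $\poly(|x|)$ queries, all on circuits from the family underlying $\mathcal{A}$, so a union bound shows they are all answered correctly except with probability $2^{-\poly(n)}$. This yields an $\mathsf{FBPP}^{\NP}$ algorithm computing $f_{0}$, hence (by $\sharpP$-completeness of $f_{0}$) a $\mathsf{BPP}^{\NP}$ reduction from any $f\in\sharpP$ to the approximation problem; equivalently, an $\mathsf{FBPP}^{\NP}$ algorithm for that problem would put $\sharpP\subseteq\mathsf{FBPP}^{\NP}$ and collapse $\PH$ to a finite level by Toda's theorem.

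The substantive content is already contained in Lemmas~\ref{lem:Linterpolate}, \ref{lem:weak_oracle} and \ref{lem:SharpP}; for this theorem the only genuine point of care, and the thing I expect to be the real obstacle, is the quantifier matching of the second paragraph: one must ensure the hypothesized additive error is small on the $m\log m$ scale and not merely on the $m$ scale, so that after the unavoidable $\exp(\Theta(m\log m))$ amplification coming out of Lemma~\ref{lem:weak_oracle} (which already absorbs both the Lagrange-extrapolation blow-up and the renormalization by $|Q(1)|^{-2}\le 2^{O(m\log m)}$, valid except on the $1/\poly(m)$ probability event that some Cayley eigenvalue exceeds $\poly(m)$) the extrapolated value is still accurate to $o(2^{-n})$ as Lemma~\ref{lem:SharpP} requires. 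It is instructive that an $m\log m$-scale, rather than $\poly(m)$-scale, robustness hypothesis is exactly what the polynomial-extrapolation step forces here.
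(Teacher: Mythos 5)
Your proposal is correct and follows essentially the same route as the paper, whose proof of this theorem is precisely the composition of Lemma~\ref{lem:weak_oracle} (average-case to worst-case via the $\NP$-oracle binary search) with Lemma~\ref{lem:SharpP} (worst-case $\sharpP$-hardness with identity-gate padding); your bookkeeping of the error budget, the choice of architecture, and the handling of the randomized oracle inside Lemma~\ref{lem:SharpP} fill in exactly the details the paper leaves implicit in its one-line proof. The only nitpick is that $O(\delta^{-2})$ median trials boost the $\tfrac12+\delta$ success probability to a constant above $\tfrac12$, not to $1-2^{-\poly(n)}$; achieving the latter needs an extra $\poly(n)$ factor of trials, which is still polynomial and does not affect the argument.
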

\begin{proof}
This is immediate from Lemmas \ref{lem:weak_oracle} and \ref{lem:SharpP}.
\end{proof}
\begin{rem}
Since $\sharpP$-hardness implies $\ceqp$-hardness with respect to Turing reduction, in the statements of the main theorems above one can simply replace $\sharpP$-hard with $\ceqp$-hard with respect to Turing reduction.
\end{rem}

\section{Hardness of fixed circuits}
In this section we show that any circuit with the same architecture as the worst-case circuit also has the same hardness properties as shown above. A perhaps surprising corollary is that sampling from circuits close to identity as $\# P$-hard. 
\begin{thm}
Let $U=U_m\cdots U_1$ be a fixed quantum circuit with an architecture $\mathcal{A}$. Let $\mathcal{O}$ be an oracle approximating the output probability  to within the  additive error $\exp(-\Omega(m\log m))$ of a circuit with the architecture $\mathcal{A}$ whose local gates are $\Delta$-close to $U_k$ in spectral norm, where $\Delta=o(m^{-1})$. Then, there is a $\mathsf{BPP}^{\mathsf{NP},\mathcal{O}}$ algorithm solving \#P-problem.
\end{thm}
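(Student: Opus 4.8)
The plan is to reuse the machinery of Lemma~\ref{lem:weak_oracle}, but to orient the Cayley path so that its ``generic'' end sits in the prescribed $\Delta$-neighborhood of the fixed circuit $U$ (rather than at a Haar-random circuit) while its other end is a $\sharpP$-hard circuit; one then queries $\mathcal{O}$ near the generic end, where it is promised to be accurate, and Lagrange-extrapolates to the $\sharpP$-hard point $x=1$. Concretely, given a $\sharpP$ function $f$, I would first pass to the sign-unambiguous $g$ and, exactly as in Lemmas~\ref{thm:Ampl_GapP} and~\ref{lem:SharpP}, realize inside the architecture $\mathcal{A}$ a polynomial-time uniform family of ``target'' circuits $C^{\star}$ (the circuit of Lemma~\ref{thm:Ampl_GapP} with identity-product padding filling the remaining gate slots of $\mathcal{A}$, the level of $\mathcal{A}$ taken large enough that the gate and qubit counts $m,n$ leave room in the error budget below) with $|\langle 0^{n}|C^{\star}|0^{n}\rangle|^{2}=(1-g(x)/2^{\,n-1})^{2}$, so that knowing this number to additive error $2^{-2n}$ determines $g(x)$, hence $f(x)$. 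It then suffices to approximate $|\langle 0^{n}|C^{\star}|0^{n}\rangle|^{2}$ to within $2^{-2n}$ using $\mathcal{O}$ and one \NP\ oracle.

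\textbf{A path ending at $U$.} I would draw a circuit $D$ from the natural distribution on the $\Delta$-neighborhood of $U$ on which $\mathcal{O}$ is promised to be $(\tfrac34+\delta)$-accurate --- concretely $D_k=U_k f(\theta_0 h^{(0)}_k)$ with $f(h^{(0)}_k)$ Haar and $\theta_0$ so small that $\|D_k-U_k\|<\Delta/2$ except with probability $2^{-\mathrm{poly}(m)}$ --- and interpolate gate by gate between $D$ and $C^{\star}$ with the Cayley function~\eqref{eq:f_x}: $C_k(x)=D_k f(x h_k)$, $h_k=f^{-1}(D_k^{\dagger}C^{\star}_k)$, which for generic $D$ (avoiding the measure-zero set of $D$'s for which some $D_k^{\dagger}C^{\star}_k$ has eigenvalue $-1$) are finite Hermitian matrices with $\|h_k\|=O(\Delta^{-1})$ in the worst case, and $C(0)=D$, $C(1)=C^{\star}$. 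Choosing $|x|\le\delta_0$ with $\delta_0$ a sufficiently small polynomial in $m^{-1}$ (it suffices to take $\delta_0=\Theta(\Delta^{2})=o(m^{-1})$) keeps every $C_k(x)$ within $\Delta$ of $U_k$, so the promise on $\mathcal{O}$ applies to $C(x)$; and by a bound of the type of Lemma~\ref{lem:tvd} the law of $C(x)$ over the draw of $D$ is, for each fixed such $x$, within total variation $1/\mathrm{poly}(m)$ of the oracle's good distribution, so $\mathcal{O}$ errs by more than $\epsilon$ on $C(x)$ with probability at most $\tfrac14-\delta+1/\mathrm{poly}(m)$. Finally, as in~\eqref{eq:P_over_Q-1-1}--\eqref{eq:Qx_bound}, $|\langle 0^{n}|C(x)|0^{n}\rangle|^{2}=p_e(x)/|Q(x)|^{2}$ with $p_e$ a degree-$d$ polynomial, $d=8m$, and $|Q(x)|^{2}$ efficiently computable and $=1+O(m\delta_0)$ on $|x|\le\delta_0$.

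\textbf{Query, fit, extrapolate.} Now I would run Algorithm~\ref{alg:R} essentially unchanged on this path: take $L=\lceil(d+1)/\delta\rceil$ equally spaced $x_i\in[-\delta_0,\delta_0]$, set $y_i=\mathcal{O}(C(x_i))|Q(x_i)|^{2}$, and note (Markov's inequality, as in Appendix~\ref{apx:conc}) that with probability $\ge\tfrac12+\delta-1/\mathrm{poly}(m)$ at least $(1+\delta)L/2$ of the $y_i$ lie within $|Q(x_i)|^{2}\epsilon$ of $p_e(x_i)$; conditioned on that, the \NP-oracle $W$ from Lemma~\ref{lem:weak_oracle} (binary search pinning $\tilde p(1)\in[0,2)$) returns a degree-$d$ polynomial $\tilde p$ matching $\ge(1+\delta)L/2$ of the $y_i$ up to $|Q(x_i)|^{2}\epsilon$ with $\tilde p(1)$ known to $2^{-\mathrm{poly}(m)}$. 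Since $p_e$ has the same agreement property, $|p_e-\tilde p|\le 2|Q(\cdot)|^{2}\epsilon$ on at least $\delta L\ge d+1$ of the $x_i$, and Lemma~\ref{lem:Linterpolate} yields
\[
|p_e(1)-\tilde p(1)|\le\epsilon\,(2+O(m\delta_0))\exp\!\big[d\big(1+\log((1+\delta_0^{-1})(L-1)/d)\big)\big]=\epsilon\,\exp\!\big[O(m\log(m/\delta))\big].
\]
Dividing by the exactly computed $|Q(1)|^{2}$ recovers $|\langle 0^{n}|C^{\star}|0^{n}\rangle|^{2}$ to within $\epsilon\exp[O(m\log(m/\delta))]+2^{-\mathrm{poly}(m)}$; with $\delta=1/\mathrm{poly}(n)$ (amplified afterwards by $O(\delta^{-2})$ repetitions) and the constant hidden in $\epsilon=\exp(-\Omega(m\log m))$ taken large enough relative to that $O(\cdot)$ and to $\log(1/\delta)/\log m$, this is below $2^{-2n}$, which by the first step returns $f(x)$. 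The whole procedure runs in time $\mathrm{poly}(n,m)$ with one \NP\ oracle and oracle access to $\mathcal{O}$, i.e.\ it is a $\mathsf{BPP}^{\mathsf{NP},\mathcal{O}}$ algorithm for $\sharpP$; taking $U=I$ gives Corollary~\ref{cor:id} (and if one instead assumes $\mathcal{O}$ accurate on a $1-O(1/m)$ fraction of the neighborhood, the \NP\ oracle can be dropped by the union-bound argument of Lemma~\ref{lem:strong_oracle}, matching Theorem~\ref{thm:strong}).

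\textbf{Main obstacle.} I expect the crux to be calibrating the query radius $\delta_0$ against three competing demands at once: (i) it must be small enough that every interpolated gate $C_k(x_i)$ stays inside the $\Delta$-ball around $U_k$ --- and here one must remember that $\|h_k\|$ can be as large as $\Theta(\Delta^{-1})$ when some $U_k$ has a near-$(-1)$ eigenvalue, forcing $\delta_0=\Theta(\Delta^{2})$; (ii) it must be small enough that the induced law of $C(x_i)$ is within $1/\mathrm{poly}(m)$ total variation of the distribution on which $\mathcal{O}$ is $(\tfrac34+\delta)$-accurate, so the Markov step still beats $\tfrac12$; yet (iii) it (together with $L$) must not be so extreme that the degree-$(8m)$ polynomial $p_e$ becomes unrecoverable, i.e.\ the $x\!\approx\!0\to x=1$ extrapolation blow-up $\exp[O(m\log(m/\delta))]$ must still be dominated by the $\exp(-\Omega(m\log m))$ additive error. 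All three are reconciled by $\delta_0$ polynomially small in $m^{-1}$ together with a large constant in the exponent of $\epsilon$, exactly paralleling Lemma~\ref{lem:weak_oracle}; the only genuinely new verification is that a Cayley path terminating at an \emph{arbitrary} fixed circuit still has finite $\|h_k\|$ (for generic $D$) and the degree-$(8m)$ rational structure of~\eqref{eq:P_over_Q-1-1}, which holds because the $|Q|^{2}$ estimate~\eqref{eq:Qx_bound} depends only on $|h_{k,\alpha}/r_{k,\alpha}|<1$ and not on the size of $\|h_k\|$.
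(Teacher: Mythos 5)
Your overall strategy --- orienting the Cayley path with its generic end in the $\Delta$-neighborhood of $U$ and its $\theta=1$ end at a $\sharpP$-hard circuit, querying $\mathcal{O}$ near $\theta=0$, and replaying the Lagrange/\NP-oracle machinery of Lemma~\ref{lem:weak_oracle} --- is the shape of the paper's argument, and you correctly flag that the one genuinely new verification is keeping $\|h_k\|$ under control so that the query points stay inside the $\Delta$-ball. But your resolution of that issue is where the proposal breaks.

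You handle it by drawing a random $D$ near $U$ and appealing to genericity: $D_k^\dagger C_k^\star$ has no eigenvalue at $-1$ for almost every $D$, so $h_k=f^{-1}(D_k^\dagger C_k^\star)$ is finite, and you then assert $\|h_k\|=O(\Delta^{-1})$ ``in the worst case.'' That assertion is not justified and is in fact false. If $U_k^\dagger C_k^\star$ has an eigenvalue at (or near) $-1$, a perturbation of size $\Delta$ can put the corresponding eigenvalue of $D_k^\dagger C_k^\star$ at \emph{any} arc-distance $\eta\in(0,\Delta]$ from $-1$, and $\|h_k\|=\max_j|\tan(u_j/2)|=\Theta(1/\eta)$, which has no lower bound in terms of $\Delta$. ``Measure-zero bad set'' gives no quantitative control; a tail bound that survives a union bound over all $m$ gates would at best yield $\|h_k\|\le \poly(m)/\Delta$ with $1/\poly(m)$ failure probability, which you neither state nor prove, and your choice $\delta_0=\Theta(\Delta^2)$ is calibrated to the unproved $O(\Delta^{-1})$ bound.

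The paper disposes of this cleanly and deterministically, with no random $D$ and no genericity argument: one multiplies each gate of the worst-case circuit by a global phase, $C_k^\star\mapsto e^{i\varphi_k}C_k^\star$, which changes no output probability (hence preserves $\sharpP$-hardness), and since each $U_k^\dagger C_k^\star$ is at most $4\times4$ one can always rotate its spectrum so that all eigenvalues sit a fixed constant arc-distance from $-1$. This gives $\|h_k\|_\infty\le f^{-1}(e^{i3\pi/4})=O(1)$ for \emph{every} $k$, from which $\|U_k(\theta)-U_k\|_\infty=2|\theta|\max_j|\lambda_j/(1-i\theta\lambda_j)|=O(\Delta)$ holds uniformly for $|\theta|\le\Delta$. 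The rest of the ingredients you list (degree-$8m$ rational structure, $|Q|^2$ estimate, Lemma~\ref{lem:Linterpolate}, Markov plus \NP-oracle binary search) then carry over as you describe. You should replace the genericity argument with this phase-rotation trick; it is exactly the step at which your proof currently has a gap.
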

\begin{proof}
Let $U_k(\theta)=U_k f(\theta h_k)$, where $f(h_k)=U^\dagger_k C_k$ and $C_k$ is the corresponding gate of the worst-case circuit. We have $U(0)=U_k$ and $U_k(1)=C_k$. Let us quantify the distance between the distribution over $U_k(\theta)$ and $U_k$ for $|\theta|\le\Delta$.

Let $\{\lambda_j\}_j$ be eigenvalues of the Hermitian matrix $h_k$.
Using the spectral norm and its invariance under unitary multiplication we have
\begin{align*}
&\| U_k(0)-U_k(\theta)\|_\infty = \| f_k(0)-f_k(\theta h_k)\|_\infty\\
&=\max_j \left|1-\frac{1+i\theta\lambda_j}{1-i\theta\lambda_j}\right|
=2|\theta|\max_j \left|\frac{\lambda_j}{1-i\theta\lambda_j}\right|\\
&\le 2\Delta\max_j \left|\lambda_j\right|=2\Delta \|h_k\|_\infty= O(\Delta)\;,
\end{align*}
where we used $\|h_k\|_\infty = O(1)$ because of the following argument.  Now $\|h\|_\infty$ is large if the unitary matrix $U_k^\dagger C_k$ has an eigenvalue near $-1$.  We have the freedom to multiply each $C_k$ by a global phase $C_k\rightarrow e^{i\varphi_k} C_k$, as this will not change the observables. In particular, all probabilities and $\# P$-hardness of the worst-case circuit $C$ are invariant under this transformation.
The multiplication of the global phase can be used to avoid the eigenvalue near $-1$.
Here, this unitary matrix has size at most four, and we can always multiply $C_k$ by an appropriate $e^{i\varphi_k}$ to rotate any eigenvalues close to $-1$ such that all eigenvalues of $U_k^\dagger C_k$ are a constant distance away from $-1$ rendering $\max_j \left|\lambda_j\right|=\|h_k\|_\infty\le f^{-1}(\exp(i 3\pi/4))=O(1)$.
\end{proof}
This theorem leads to the following result:
\begin{cor}\label{cor:id}
In Theorem 3 the local gates can be identity $U_k=I$. 
\end{cor}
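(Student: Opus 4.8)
The plan is to verify that the proof of Theorem~3 goes through essentially verbatim when every $U_k$ is taken to be the identity gate. The only place in Theorem~3's proof where a property of the fixed gates $U_k$ was used is the bound $\|h_k\|_\infty = O(1)$, which in turn relied on being able to multiply $C_k$ by a global phase so that the eigenvalues of $U_k^\dagger C_k$ stay a constant distance from $-1$. So the corollary amounts to the observation that this argument is unaffected by specializing $U_k=I$: one still has the freedom to rotate $C_k \mapsto e^{i\varphi_k}C_k$ without changing any output probability or the $\#\mathsf{P}$-hardness of the worst-case circuit $C$, and since $C_k$ is a gate on at most two qubits, $U_k^\dagger C_k = C_k$ is a unitary of size at most four whose (at most four) eigenvalues can simultaneously be pushed away from $-1$ by a single choice of phase.

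Concretely, I would argue as follows. Fix the worst-case architecture $\mathcal{A}$ and the worst-case circuit $C=C_m\cdots C_1$ of Lemma~\ref{lem:SharpP}. Set $U_k=I$ for every $k$, so that $f(h_k)=U_k^\dagger C_k = C_k$ and $h_k = f^{-1}(C_k)$. Choose phases $\varphi_k$ so that every eigenvalue of $e^{i\varphi_k}C_k$ is at angular distance at least, say, $\pi/4$ from $-1$ on the unit circle; this is possible because $C_k$ has at most four distinct eigenvalues occupying at most four points of the circle, leaving an arc of positive length near $-1$ free for at least one admissible rotation. Replacing $C_k$ by $e^{i\varphi_k}C_k$ changes $C$ only by an overall global phase $e^{i\sum_k\varphi_k}$, hence leaves $|\langle 0^n|C|0^n\rangle|^2$ and all $\#\mathsf{P}$-hardness properties intact. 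With this choice, $\|h_k\|_\infty = \max_j |\lambda_j| \le |f^{-1}(e^{i3\pi/4})| = O(1)$, exactly the bound used in the proof of Theorem~3.

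Given $\|h_k\|_\infty=O(1)$, the bound $\|U_k(0)-U_k(\theta)\|_\infty = O(\Delta)$ for $|\theta|\le\Delta$ holds verbatim, so for $\Delta=o(m^{-1})$ the circuit $U(\theta)$ with $\theta=1-x$ ranges over an open neighborhood of the all-identity circuit as $x$ ranges over $[-\Delta,\Delta]$, while $U(1)=C$ is the $\#\mathsf{P}$-hard worst-case circuit. The oracle $\mathcal{O}$ of Theorem~3, which approximates output probabilities of circuits whose local gates are $\Delta$-close to $U_k=I$, therefore supplies the values $y_i = \mathcal{O}(C(x_i))|Q(x_i)|^2$ needed to run the Lagrange-extrapolation and binary-search machinery of Lemma~\ref{lem:weak_oracle} (or Lemma~\ref{lem:strong_oracle}), and the $\#\mathsf{P}$-hardness of computing $|\langle 0^n|C|0^n\rangle|^2$ then follows from Lemma~\ref{lem:SharpP} exactly as in the proof of Theorem~3.

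I do not expect a genuine obstacle here: the corollary is a specialization, and the one substantive point --- that the spectral-norm bound on $h_k$ survives --- is already handled by the global-phase trick in the proof of Theorem~3, which never used $U_k\ne I$. The only thing to be careful about is to state explicitly that the phase rotation is applied to the worst-case gates $C_k$ (not to the fixed gates $U_k=I$, which we are not free to move), and that this is legitimate because global phases on individual gates compose to a global phase on $C$ and hence do not affect observables or hardness. Everything else is inherited from Theorem~3.
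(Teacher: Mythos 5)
Your proposal is correct and follows the same route as the paper, which offers no separate argument for Corollary~\ref{cor:id} precisely because it is the specialization $U_k=I$ of Theorem~3: the only gate-dependent step, the bound $\|h_k\|_\infty=O(1)$ obtained by rotating $C_k\mapsto e^{i\varphi_k}C_k$ so that the (at most four) eigenvalues of $U_k^\dagger C_k$ stay a constant distance from $-1$, is unaffected by taking $U_k=I$, and the rest of the reduction machinery (Lemmas~\ref{lem:strong_oracle}, \ref{lem:weak_oracle}, \ref{lem:SharpP}) is inherited verbatim. Only a cosmetic point: your substitution $\theta=1-x$ has the endpoints swapped relative to the parametrization $U_k(0)=U_k$, $U_k(1)=C_k$ used in Theorem~3 (evaluation points should sit near $\theta=0$, i.e.\ near the identity circuit, with extrapolation to $\theta=1$), but this relabeling does not affect the argument.
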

Gottesman-Knill theorem states that circuits made up of Clifford gates are classically efficient to simulate. The Sum-over-Cliffords method of~\cite{bravyi2019simulation}\text[section 2.3] gives classical algorithms for sampling the output of near-Clifford circuits with non-trivial improvement on the exponential scaling of the run-time. However the above corollary shows that approximating the probability amplitudes of {\it near} Clifford circuits to within the stated additive error cannot be performed efficiently on a classical computer unless the polynomial hierarchy collapses to finite level.

Lastly, we comment on a class of quantum circuits known as IQP circuit~\cite{bremner2011classical,bremner2016achieving,bremner2016average}.  These circuits have the following form $C=H^{\otimes n}C_Z H^{\otimes n}$-- that is, the first and last layers are $n$ Hadamard gates and all intermediate gates are diagonal in the $Z$-basis. We believe our techniques along with the proof of the total variation distance in~\cite{movassagh2020quantum} can be generalized to random IQP circuits in a straightforward manner. This would imply the hardness of approximating the output probabilities of average-case IQP circuits.\\

\subsubsection*{Acknowledgement}
R.~Movassagh acknowledges funding from the MIT-IBM Watson AI Lab under the project {\it Machine Learning in Hilbert space}. The research was supported by the IBM Research Frontiers Institute.
R.~Mori was supported in part by JST PRESTO Grant Number JPMJPR1867
and JSPS KAKENHI Grant Numbers JP17K17711, JP18H04090, JP20H04138, and JP20H05966.

\bibliographystyle{IEEEtran}
\bibliography{mybib}

\appendices
\section{Proof of Lemma~\ref{lem:Linterpolate}}\label{apx:Linterpolate}
Let $p_{j}=p(x_{j})$ for all $j=\{0,1,2,\dots,d\}$, where by assumption
$|p_{j}|\le\epsilon$. The Lagrange representation of the function $p(x)$
writes
\[
p(x)=\sum_{j=0}^{d}p_{j}\;\delta_{j}(x),\qquad\delta_{j}(x)\equiv\frac{\prod_{\ell\ne j}x-x_{\ell}}{\prod_{\ell\ne j}x_{j}-x_{\ell}}\;.
\]
By triangular inequality we have $|p(1)|\le\epsilon\sum_{j=0}^{d}|\delta_{j}(1)|$.
Moreover, using the fact that $|x_{j}|\leq \Delta$ for
all $j$ we have
\begin{align*}
|\delta_{j}(1)|&=\frac{\prod_{\ell\ne j}|1-x_{\ell}|}{\prod_{\ell\ne j}|x_{j}-x_{\ell}|}<\frac{\prod_{\ell\ne j}(1+\Delta)}{\prod_{\ell\ne j}|x_{j}-x_{\ell}|}\\
&<\frac{(1+\Delta)^d}{\prod_{\ell\ne j}|x_{j}-x_{\ell}|}.
\end{align*}
Since $|x_{j}-x_{\ell}|= \frac{2\Delta}{L-1}|a_j-a_\ell|\geq \frac{2\Delta}{L-1}|j-\ell|$, we have $\prod_{\ell\ne j}|x_{j}-x_{\ell}|\geq (\frac{2\Delta}{L-1})^{d}\prod_{\ell\ne j}|j-\ell|$
. Moreover $\prod_{\ell\ne j}|j-\ell|=\prod_{\ell\in\{0,1,2,\dots,,j-1,j+1,\dots,d\}}|j-\ell|=j!(d-j)!$
and we obtain
\begin{align*}
|\delta_{j}(1)|&< \left(\frac{L-1}{2\Delta}\right)^{d}\frac{(1+\Delta)^d}{\prod_{\ell\ne j}|j-\ell|}\\
&=\left(\frac{L-1}{2\Delta}\right)^{d}\frac{(1+\Delta)^d}{j!\:(d-j)!}\;.
\end{align*}
We express the bound  $|p(1)|\le\epsilon\sum_{j=0}^{d}|\delta_{j}(1)|$
as
\begin{align*}
|p(1)| & < \epsilon\left(\frac{(L-1)(1+\Delta)}{2\Delta}\right)^{d}\sum_{j=0}^{d}\frac{1}{j!\:(d-j)!}\\
&=\epsilon\left(\frac{(L-1)(1+\Delta)}{2\Delta}\right)^{d}\frac{1}{d!}\sum_{j=0}^{d}\binom{d}{j}\\
&=\epsilon\left(\frac{(L-1)(1+\Delta)}{2\Delta}\right)^{d}\frac{2^d}{d!}.
\end{align*}
By Stirling's inequality $n! \ge \sqrt{2\pi n}\frac{n^n}{\mathrm{e}^n}$, we conclude that
\begin{align*}
|p(1)|&<\epsilon\,\left(\frac{(L-1)(1+\Delta)}{2\Delta}\right)^{d}\frac{2^d}{d!}\\
&\le
\epsilon \frac{\left(\mathrm{e}(L-1)(1+\Delta^{-1})/d\right)^{d} }{\sqrt{2\pi d}}\\
&= \epsilon \,  \frac{\exp [d (1 + \log ((1+\Delta^{-1})\frac{L-1}{d}))]}{\sqrt{2\pi d}}.
\end{align*}

\section{The proof of the concentration}\label{apx:conc}
The proof was originally given in \cite{aaronson2011computational}, see also Theorem 1 in \cite{movassagh2020quantum}.
From Lemma~\ref{lem:tvd},
\[
\Pr_{C\sim\mathcal{H}_{\mathcal{A},\Delta}}\left[|\mathcal{O}(C) - p(\Delta)|\le|Q(\Delta)|^2\epsilon\right] \ge \frac34 + \delta - O(m\Delta)
\]
Let 
\[
\Theta\defeq\{i\in\{0,\dotsc,L-1\}\mid |p(x_{i})-y_{i}|\le |Q(x_i)|^2\epsilon\}.
\]
Then
\begin{align*}
&\Pr_{C\sim\mathcal{H}_{\mathcal{A},\Delta}}\left[\left|\Theta\right|
\ge (1+\delta)L/2\right]\\
&= 1-\Pr_{C\sim\mathcal{H}_{\mathcal{A},\Delta}}\left[\left|\Theta\right|< (1+\delta)L/2\right]\\
&= 1-\Pr_{C\sim\mathcal{H}_{\mathcal{A},\Delta}}\left[L-\left|\Theta\right|> (1-\delta)L/2\right]\\
&\ge 1 - \frac{(\frac14-\delta+O(m\Delta))L}{(1-\delta)L/2}\\
&= \frac12 + \frac32\frac{\delta}{1-\delta} - O(m\Delta).
\end{align*}
The inequality is obtained from Markov's inequality.

\end{document}